\colorlet{DarkRed}{red!50!black}
\colorlet{DarkGreen}{green!50!black}
\colorlet{DarkBlue}{blue!50!black}
\DeclareMathOperator{\fim}{FIM}
\DeclareMathOperator{\al}{AL}
\DeclareMathOperator{\ssat}{SAT}
\DeclareMathOperator{\g}{g}
\DeclareMathOperator{\np}{NP}
\DeclareMathOperator{\pp}{P}
\DeclareMathOperator{\var}{var}
\DeclareMathOperator{\XP}{XP}
\DeclareMathOperator{\FPT}{FPT}
\newcommand{\NP}{\ensuremath{\np}\xspace}
\newcommand{\PP}{\ensuremath{\pp}\xspace}
\newcommand{\FIMAL}{\ensuremath{\fim_{\al}}\xspace}
\newcommand{\FIMALG}{\ensuremath{\fim_{\al}^{\g}}\xspace}
\newcommand{\STWOSAT}{\ensuremath{\Sigma_2\ssat}\xspace}
\DeclareMathOperator{\opt}{opt}
\DeclareMathOperator{\argmin}{argmin}
\DeclareMathOperator{\argmax}{argmax}
\DeclareMathOperator{\poly}{poly}
\DeclareMathOperator{\E}{\mathbb{E}}
\let\epsilon\varepsilon
\let\eps\varepsilon
\newcommand{\ones}{\mathds{1}}
\newcommand{\UUU}{\mathcal{U}}
\newcommand{\CCC}{\mathcal{C}}
\newcommand{\SSS}{\mathcal{S}}
\newcommand{\III}{\mathcal{I}}
\newcommand{\DDD}{\mathcal{D}}
\newcommand{\MMM}{\mathcal{M}}
\newcommand{\algo}[1]{\texttt{\textbf{#1}}}
\definecolor{orange}{RGB}{235,90,0}
\definecolor{darkorange}{RGB}{175,30,0}
\definecolor{turkis}{RGB}{131,182,182}
\definecolor{darkturkis}{RGB}{31,82,82}
\definecolor{green}{RGB}{102,180,0}
\definecolor{darkgreen}{RGB}{51,90,0}
\definecolor{myblue}{RGB}{0,0,213}
\definecolor{mydarkblue}{RGB}{0,0,100}
\definecolor{mybrightblue}{HTML}{74B0E4}
\definecolor{mybrighterblue}{HTML}{B3EAFA}
\definecolor{lila}{RGB}{102,0,102}
\definecolor{darkred}{RGB}{139,0,0}
\definecolor{darkyellow}{RGB}{188,135,2}
\definecolor{brightgray}{RGB}{200,200,200}
\definecolor{darkgray}{RGB}{50,50,50}
\definecolor{amaranth}{rgb}{0.9, 0.17, 0.31}
\definecolor{alizarin}{rgb}{0.82, 0.1, 0.26}
\definecolor{amber}{rgb}{1.0, 0.75, 0.0}
\definecolor{green(ryb)}{rgb}{0.4, 0.69, 0.2}
\definecolor{hanblue}{rgb}{0.27, 0.42, 0.81}
\definecolor{grannysmithapple}{rgb}{0.66, 0.89, 0.63}
\definecolor{darkcerulean}{rgb}{0.03, 0.27, 0.49}
\definecolor{arsenic}{rgb}{0.23, 0.27, 0.29}
\newtheorem{theorem}{Theorem}[section]
\newtheorem{lemma}[theorem]{Lemma}
\newtheorem{definition}[theorem]{Definition}
\newtheorem{corollary}[theorem]{Corollary}
\newtheorem{observation}[theorem]{Observation}
\title{Improving Fairness in Information Exposure by Adding Links}
\date{}
\begin{document}
\author[1]{Ruben Becker}
\author[2]{Gianlorenzo D'Angelo}
\author[2]{Sajjad Ghobadi}
\affil[1]{\normalsize Ca' Foscari University of Venice, Italy}
\affil[2]{Gran Sasso Science Institute, L'Aquila, Italy}

\maketitle

\begin{abstract}
	Fairness in influence maximization has been a very active research topic recently. Most works in this context study the question of how to find seeding strategies (deterministic or probabilistic) such that nodes or communities in the network get their fair share of coverage. Different fairness criteria have been used in this context. All these works assume that the entity that is spreading the information has an inherent interest in spreading the information fairly, otherwise why would they want to use the developed fair algorithms? This assumption may however be flawed in reality -- the spreading entity may be purely \emph{efficiency-oriented}. In this paper we propose to study two optimization problems with the goal to modify the network structure by adding links in such a way that efficiency-oriented information spreading becomes \emph{automatically fair}. We study the proposed optimization problems both from a theoretical and experimental perspective, that is, we give several hardness and hardness of approximation results, provide efficient algorithms for some special cases, and more importantly provide heuristics for solving one of the problems in practice. In our experimental study we then first compare the proposed heuristics against each other and establish the most successful one. In a second experiment, we then show that our approach can be very successful in practice. That is, we show that already after adding a few edges to the networks the greedy algorithm that purely maximizes spread surpasses all fairness-tailored algorithms in terms of ex-post fairness. Maybe surprisingly, we even show that our approach achieves ex-post fairness values that are comparable or even better than the ex-ante fairness values of the currently most efficient algorithms that optimize ex-ante fairness.
\end{abstract}

\section{Introduction}
The question of how information spreads through networks has been studied in various research disciplines. In computer science, the so-called influence maximization (IM) paradigm has attracted a lot of attention in the last two decades. The IM problem can be stated as follows. Given a social network $G=(V,E)$ in which information spreads according to some probabilistic model, target a set of at most $k$ seed nodes $S\subseteq V$ in such a way that $\sigma(S)$, the expect number of nodes that receive the information, is maximized~\cite{KempeKT15}. 

As online social networks play an essential role in how we acquire information nowadays and as access to information has an important impact on our lives, more recently researchers have started to study the IM framework in the presence of fairness concerns as well. Fairness may be understood w.r.t.\ individuals or communities, the former being the special case of the latter with singleton communities. Generally, in such works, for a community $C\subseteq V$, one considers the average probability $\sigma_C(S)$ of nodes in $C$ to be reached from $S$, also called the \emph{community coverage} of $C$. Then the concern is to choose $S$ in such a way that some fairness criteria on the communities is maximized. The probably most commonly used one is the maxmin (or maximin) criterion. In the most basic setting, studied, e.g., by Fish et al.~\cite{Fish19}, the goal is to find a seed set $S\subseteq V$ of size at most $k$ such that the minimum probability that nodes are reached $\min_{v\in V} \sigma_v(S)$ is maximized.

Several articles have been published in this scope and they have shown certain success in finding seeding strategies that lead to fairer outcomes. Nevertheless, they are all based on the assumption that the information spreading entity, i.e., the agent choosing $S$, has an interest in spreading information in a fair way. This assumption is however rather unrealistic in the real word. Information spreading agents may be, and probably mostly are, purely efficiency-oriented and not particularly interested in choosing fair seeding strategies. 

In this work we take a different approach to fairness. We do not rely on the good will of the information spreading entity, but instead modify the underlying social network in such a way as to make efficiency-oriented information spreading automatically fair. The modification of the network may be done by the network owner or any other entity interested in guaranteeing fairness. While different ways of modifying the network are perceivable, we choose the possibly most natural one -- we improve the network's connectivity by adding links. Here, we take the rather realistic approach to assume the information spreading entity to be indifferent rather than adversarial towards fairness.

\paragraph{Our Contribution.}
We formalize this problem as follows. Given a social network $G=(V,E)$, we want to add at most $b$ non-edges $F\subseteq \bar E = V^2\setminus E$ to $G$ in such a way that the minimum community coverage is maximized when information is spread in $G'=(V, E\cup F)$ using a purely efficiency oriented seeding strategy, i.e., a seed set $S$ of size $k$ that maximizes the spread in $G'$, we measure this using the function $\sigma(\cdot, F)$. We call this the \FIMAL problem -- fair influence maximization by adding links. We study the complexity of solving \FIMAL (\cref{sec: fimal}) and provide plenty of evidence that solving \FIMAL is challenging, both exactly and approximately. Maybe most importantly, we show that it is unlikely to be able to find an $\alpha$-approximation to the optimal solution, for any $\alpha\in (0,1]$, even when having access to an oracle that solves an \NP-complete problem. We furthermore show that \FIMAL remains \NP-hard for constant $b$ or $k$ (in the latter case even to be approximated).

We thus turn to study a second problem (\cref{sec: fimalg}) that is possibly practically better motivated in the first place -- the \FIMALG problem: Here instead of assuming that the efficiency oriented entity uses maximizing sets to spread information, we assume it to employ the greedy algorithm. This is a quite realistic assumption as the problem of finding a maximizing set is \NP-hard, while the greedy algorithm can be used in order to obtain a $1 - 1/e - \eps$-approximation for any $\eps\in (0,1)$ w.h.p.\ in $\poly(n, \eps^{-1})$ time, i.e, polynomial time in $n=|V|$ and $\eps^{-1}$. Even more, this approximation guarantee is essentially optimal~\cite{KempeKT15}. Multiple implementations of the greedy algorithm for IM exist (e.g., \cite{TangXS14, TangSX15}) and they have been shown to be extremely efficient in practice. We observe that, in contrast to \FIMAL, the \FIMALG problem is polynomial time solvable when $b$ is a constant -- exactly in the (unrealistic) case of deterministic instances and up to an arbitrarily small additive error in the probabilistic case. While this highlights the difference between the two problems, the proposed algorithm is essentially a brute-force algorithm and is thus not promising in practice. We complement the finding of this algorithm for the special case of constant $k$ with a lower bound showing that it is \NP-hard to provide any approximation algorithm. We then propose a set of algorithms for \FIMALG and evaluate them against each other in a first experiment in~\cref{sec: experiments}. We then take the best performing algorithm for \FIMALG and, in a second experiment, compare the resulting fairness (i.e., fairness achieved by the greedy algorithm after adding the proposed non-edges to the graph) with competitor algorithms that choose seeds as to optimize fairness. We observe that already after adding very few edges to graphs with thousands of nodes, the fairness achieved by our algorithm outperforms the fairness achieved by the fairness-tailored algorithms. Maybe surprisingly, this even holds for algorithms that optimize ex-ante fairness.

We summarize our theoretical results for \FIMAL and \FIMALG in Table~\ref{tbl:bounds} together with references to the respective statements in later sections.
\begin{table}[ht]
    \centering{{\small
        \begin{tabular}{c||c|c|c|}
                    & general                                   & constant $b$                          & constant $k$                              \\ \hline\hline
            \FIMAL  & \makecell{$\Sigma_2^p$-hard\\
                        {\small
                        $[$Thm.~\ref{thm: dFIMAE sigma2p}$]$}\\
                        $\Sigma_2^p$-hard to \\
                        $\alpha$-approx.\\
                        {\small
                        $[$Thm.~\ref{thm: fimal inapx}$]$}}  & \makecell{\NP-hard\\
                                                                {\small
                                                                $[$Thm.~\ref{thm: FIMAE inapx b1}$]$}}
                                                                                                        & \makecell{\NP-hard to\\ 
                                                                                                            $\alpha$-approx.\\
                                                                                                            {\small
                                                                                                            $[$Thm.~\ref{thm: FIMAE inapx k1}$]$}}  \\ \hline
            \FIMALG & \makecell{\NP-hard to\\ 
                         $\alpha$-approx.\\
                        {\small
                        $[$Cor.~\ref{cor: FIMAEg inapx}$]$}}  & \makecell{
                                                                - poly.\ time\\ 
                                                                (determ.)\\
                                                                {\small
                                                                $[$Obs.~\ref{obs: FIMALG det}$]$}\\
                                                                - $\eps$-approx.\\
                                                                (prob.)\\
                                                                 {\small
                                                                $[$Lem.~\ref{lem: FIMALG apx}$]$}}      & \makecell{\NP-hard to\\ 
                                                                                                            $\alpha$-approx.\\
                                                                                                        {\small
                                                                                                        $[$Cor.~\ref{cor: FIMAEg inapx}$]$}}        \\ \hline
        \end{tabular}}
    }
\caption{Summary of our complexity results. The number $\alpha$ can be any factor in $(0,1]$. }
\label{tbl:bounds}
\end{table}

\paragraph{Related Work.}
There is a rich set of related works in this area, so we are able to summarize only the results most related to our work. Fish et al.~\cite{Fish19} were the first to study the maximin criterion in influence maximization w.r.t.\ single nodes. Tsang et al.~\cite{TsangWRTZ19} study the maximin criterion w.r.t.\ groups. Becker et al.~\cite{BeckerDGG22} also consider the maximin criterion for groups, but allow probabilistic seeding strategies. Stoica and Chaintreau~\cite{StoicaC19} analyze the fairness achieved by standard algorithms for influence maximization. 

There are several works in which the authors add links to the network, however they do so with a different objective. Both Castiglioni, Ferraioli,
and Gatti~\cite{CastiglioniF020} and Cor{\`{o}}, D’Angelo, and Velaj~\cite{CoroDV21} study the problem of adding edges to a graph in order to maximize the influence from a given seed set in different models of diffusion.
Castiglioni et al.~\cite{CastiglioniF020} prove that, for the independent cascade model, 
it is $\NP$-hard to approximate the problem to within any constant factor. Cor{\`{o}} et al.~\cite{CoroDV21} study the problem with the goal of adding a limited number of edges to a given set of seed nodes. They considered the linear threshold model and proposed a constant approximation algorithm.
D’Angelo, Severini, and Velaj~\cite{DAngeloSV19} study the problem of adding a set of edges \emph{incident to a given seed set} with the same aim. In a setting, where the cost of adding each edge is 1, the authors showed that it is NP-hard to approximate the problem within a factor better than $1-1/(2e)$, and they proposed an algorithm with an approximation factor of $1-1/e$ for the independent cascade model. They extended the results to the general case where the cost of adding each edge is in $[0, 1]$.
Wu, Sheldon, and Zilberstein~\cite{wu2015efficient} consider also different intervention actions than just adding edges, e.g., increasing the weights of edges. The authors show that, for the independent cascade model, the problem of maximizing spread under these interventions is NP-hard and the objective function is neither submodular nor supermodular.
Khalil, Dilkina, and Song~\cite{KhalilDS14} study both the edge addition and deletion problems in order to maximize/minimize influence in the linear threshold model. They showed that the objective functions of both problems are supermodular and therefore there are algorithms for the problems with provable approximation guarantees.

Swift et al.~\cite{FairSpread} introduce a problem to suggest a set of edges that contains at most $k$ edges incident to each node to maximize the expected number of reached nodes while satisfying a fairness constraint (reaching each group in the network with the same probability). 
They show that the problem is NP-hard and even is NP-hard to approximate to within any bounded factor unless $P=NP$. Then, by violating the fairness constraint, they propose an LP-based algorithm with a factor of $1-1/e$ on the total spread and $2e/(1-1/e)$ on fairness. The main difference between our work and the problem studied in \cite{FairSpread} is that, the set of seeds in \cite{FairSpread} is fixed, known and
independent of the added edges. While our aim is to achieve fairness automatically, when
an external agent selects an efficient seed set that may explicitly depend on the added
edges.
Bashardoust et al.~\cite{BashardoustLinhFair} study the maximin criterion w.r.t. nodes where the goal is to add at
most $k$ edges to the network to maximize the minimum probability that a node receives the
information. They consider a case where each node is the source of distinct and equally-important information and information spread follows the independent cascade model with a transaction probability $\alpha \in [0, 1]$. The authors propose heuristics without providing any approximation guarantee and experimentally show that adding edges to the network can increase the minimum probability that nodes receive the information.
Garimella et al.~\cite{GarimellaMGM17} address the problem of recommending a set of edges to minimize the controversy score of the graph. The authors proposed an algorithm without providing any approximation guarantee. Moreover, they do not consider any diffusion process
in the network.
Tong et al.~\cite{TongPEFF12} transform the edge addition/deletion problem to the problem of maximizing/minimizing the eigenvalue of the adjacency matrix.
Amelkin and Singh~\cite{AmelkinS19} propose an edge recommendation algorithm to disable an attacker that aims to change the network's opinion by influencing users.



\section{Preliminaries}\label{sec: preliminaries}
For an integer $k$, we denote with $[k]$ the set of integers from $1$ to $k$. We say that an event holds with high probability (w.h.p.), if it holds with probability at least $1-n^{-\alpha}$ for a constant $\alpha$ that can be made arbitrarily large. 

\paragraph{Information Diffusion.}
Given a directed graph \(G=(V, E, w)\) with $n$ nodes $V$, edge set $E$, and edge weight function $w:V^2\rightarrow [0,1]$, we use the \emph{Independent Cascade model}~\cite{KempeKT15} for describing the random process of information diffusion. For an initial seed set \(S\subseteq V\) the spread $\sigma(S)$ from $S$ is the expected number of nodes that are reached from $S$ in a random \emph{live-edge graph} which is constructed as follows.
Every node \(v\) independently picks a \emph{triggering set} \(T_v\) by letting each $u$ in its set of in-neighbors $N_v$ be in $T_v$ independently with probability $w_e$, where $e=(u,v)$. We then let \(E_L := \bigcup_{v\in V}\{(u,v) : u\in T_v\}\) and we call $L=(V, E_L)$ a random live-edge graph. We then define \(\rho_L(S)\) as the set of nodes reachable from \(S\) in \(L\) and the expected number of nodes reached from $S$ is \(\sigma(S):=\E[|\rho_L(S)|]\), where the expectation is over the random generation of the live-edge graph $L$.
We furthermore define $\sigma_v(S) := \E[\ones_{v\in \rho_L(S)}] = \Pr[v\in \rho_L(S)]$ for every node $v\in V$, i.e., $\sigma_v(S)$ is the probability of $v$ being reached from $S$. For a set (or group) of nodes $C\subseteq V$, we let $\sigma_C(S) := \frac{1}{|C|} \sum_{v\in C} \sigma_v(S)$ be the group coverage of $C$.

When the edge probabilities belong to $\{0,1\}$, we refer to the instance as the \emph{deterministic case}, in this case $\sigma(S)$ is the (deterministic) number of nodes reachable from seeds $S$ in $G$.
In the general case, it is not feasible to compute $\sigma(S)$ via all live-edge graphs $L$, instead a \((1\pm\epsilon)\)-approximation of \(\sigma(S)\) can be obtained w.h.p.\ by averaging over \(\poly(n, \eps^{-1})\) many live-edge graphs \(L\), see, e.g., Proposition 4.1 in the work of Kempe, Kleinberg, and Tardos~\cite{KempeKT15}. Similarly, $\sigma_v(S)$ (and thus also $\sigma_C(S)$) can be approximated w.h.p., however, only to within an additive error of $\pm\eps$ by averaging over \(\poly(n, \eps^{-1})\) many live-edge graphs, see, e.g., Lemma~4.1 in the work of Becker et al.~\cite{BeckerDGG22}. 

\paragraph{Non-Edges and Spread with Added Edges.}
We let $\bar E := (V\times V) \setminus E$ denote the set of \emph{non-edges} in $G$. For a set of non-edges $F \subseteq \bar E$ and a set of seed nodes $S \subseteq V$, we define $\sigma(S, F)$ as the expected number of nodes reached from $S$ in the graph $G'=(V, E \cup F)$ that results from adding $F$ to $G$. This is the reason why we have defined the edge weight function also w.r.t.\ non-edges above. Similarly, for a node $v\in V$, $\sigma_v(S, F)$ is the probability that $v$ is reached from $S$ in $G'$ and, for a community $C \subseteq V$, we define $\sigma_C(S, F)$ to be the average probability of nodes in $C$ being reached from $S$ in $G'$. We remark that also these functions cannot be computed exactly but only approximated in the same way as their counterparts without added edges.

\section{The \texorpdfstring{\FIMAL}{FIMAL} Problem: Making Spread Maximizers Fair}\label{sec: fimal}
\paragraph{Problem Definition.}
Consider a directed weighted graph $G = (V, E, w)$ and let $\CCC$ be a community structure, i.e., $m$ non-empty communities $C \subseteq V$, and let $k$ and $b$ be two integers. For a set of non-edges $F \subseteq \bar E$, we define $\MMM(F, k) := \argmax_{S\subseteq V} \{ \sigma(S, F) :  |S| \le k \} $ to be the set of size $k$ maximizers to $\sigma(\cdot , F)$.
We are now ready to formally define the \FIMAL problem motivated above:
\begin{align*}
	\max_{F \subseteq \bar E: |F| \le b } \big\{ \tau : \min_{C\in\CCC}\sigma_C(S, F) \ge \tau,\  \forall S \in \MMM(F, k)\big\}. 
\end{align*}
We denote with $\opt_{\al}(G, \CCC, b, k)$ the optimum of~\FIMAL.
Clearly, our objective in~\FIMAL is to find a set of at most $b$ non-edges $F \subseteq \bar E$, that, when added to $G$, maximizes the minimum community coverage when information is spread in a purely ``efficiency-oriented'' way, i.e., from a set of at most $k$ seed nodes that is chosen such that the set function $\sigma(\cdot, F)$ is maximized. The motivation behind studying~\FIMAL is to, e.g., as the network owner, change the structure of a social network in such a way that an efficiency-oriented entity that wants to spread information in $G$ automatically spreads information in a more fair way. 

In what follows, we give several hardness and hardness of approximation results for \FIMAL. We start by showing that the decision version of the general \FIMAL problem is $\Sigma_2^p$-hard. We even show that it is unlikely that \FIMAL can be approximated to within any factor. We then turn to special cases of \FIMAL where either $b=1$ or $k=1$ and show that the problem remains \NP-hard also in these special cases -- for $k=1$ even hard to approximate to within any factor. 

For better comprehensibility, we first note that in the the decision version of \FIMAL, in addition to the graph \(G=(V, E)\), the communities \(\CCC\), and the integers \(b, k\), we are given a threshold \(t\) and the task is to decide if there exists \(F \subseteq \bar E\) with \(|F| \le b\) such that for all $S\in \MMM(F, k)$: \(\min_{C\in \CCC} \sigma_C(S, F)\ge t\). 

\paragraph{$\Sigma_2^p$-Hardness.}
We start by recalling the definition of the complexity class $\Sigma_2^p$.
\begin{definition}[Definition 5.1 in~\cite{AroraBarak09}]
    The class $\Sigma_2^p$ is defined to be the set of all languages $L$ for which there exists a polynomial-time Turing machine $M$ and a polynomial $q$ such that $x\in L$ if and only if $\exists u\in \{0, 1\}^{q(|x|)}:\forall v\in \{0, 1\}^{q(|x|)}: M(x,u,v)=1$.\footnote{Equivalently, see, e.g., Theorem~5.12 and Remark~5.16 in the same book, $\Sigma_2^p$ can be defined as the set of all languages that can be decided by a non-deterministic Turing machine with access to an oracle that solves some \NP-complete problem.}
\end{definition}

We next introduce the $\STWOSAT$ problem which is $\Sigma_2^p$-complete, see, e.g., Exercise~1 in Chapter~5 of the book by Arora and Barak~\cite{AroraBarak09}. 
\begin{definition}[Example 5.6 in~\cite{AroraBarak09}]
    Given a boolean expression $\phi(X, Y)$ in 3-CNF with variables $X=(x_1, \ldots, x_\nu)$ and $Y=(y_{\nu + 1}, \ldots, y_\mu)$, the $\STWOSAT$ problem entails to decide if 
    $\exists x \forall y : \phi(x, y) = \top$, where 
    $x : X \rightarrow \{0, 1\}$ and $y : Y \rightarrow \{0, 1\}$ are assignments to the variables $X$ and $Y$, respectively.
\end{definition}
For ease of presentation, we assume the indices of $Y$ to start at $\nu + 1$, such that indices of $X$ and $Y$ are disjoint.
Our goal now is to show that the decision version of \FIMAL is $\Sigma_2^p$-hard. 
We will describe a reduction from \STWOSAT to the decision version of \FIMAL. We assume that $\phi(X, Y)$ contains $m$ clauses $\phi_1, \ldots, \phi_m$ and for a clause $\phi_r$ we call $r(s)$, $s\in [3]$, the indices of the three variables corresponding to $\phi_r$'s three literals (in arbitrary fixed order). 

Given an instance of \STWOSAT, we create an instance $(G, \CCC, b, k, t)$ of the decision version of \FIMAL as follows, see Figure~\ref{fig: sigma2preduction} for an illustration. Fix a constant $M:=\mu + \nu + 6m + 1$. The node set $V$ of $G$ consists of 
\begin{itemize}
    \item $U = \{q, P\}$, where $P =p_1, \ldots, p_{M-1}$,  
    \item $V^\exists = \{v_i, \bar v_i :  i\in [\nu] \}$ and $V^\forall = \{v_j, \bar v_j, L_j :  j\in [\mu]\setminus [\nu]\}$, where $L_j = l_{j,1}, \ldots, l_{j,M-2}$,
    and
    \item $W = \{w^r_1, w^r_{\bar 1}, w^r_2, w^r_{\bar 2}, w^r_3, w^r_{\bar 3}:  r\in [m]\}$.
\end{itemize}
The edge set $E$ consists of 
\begin{itemize}
    \item $E^{\var} := \{ (v_{r(s)}, w^r_s), (\bar v_{r(s)}, w^r_{\bar s}) : s\in [3], r\in [m]\}$,
    \item $E^L$ that consists of all edges from the nodes $v_j$, $\bar v_j$ to all nodes $v\in L_j$, for $j\in[\mu]\setminus [\nu]$,
    \item $E^P$ that consists of edges from $q$ to all nodes in $P$, and
    \item $Z:=V^2\setminus (E^{\var} \cup E^L \cup E^P \cup E(q,  V^\exists))$, where $E(q, V^\exists) := \{(q,v) : v\in V^\exists \}$.
\end{itemize}
We note that as a result $\bar E = E(q, V^\exists)$.
The edge weight function is defined as $w_e = 0$ for all edges $e\in Z$ and $w_e=1$ otherwise. 
The community structure $\CCC$
consists of: (1) communities $C_{1}, \ldots, C_{m}$, where each $C_r$ is of cardinality 3 and for $s\in [3]$, $w^r_s\in C_{r}$ if $x_{r(s)}\in \phi_r$ (or $y_{r(s)}\in \phi_r$) and $w^r_{\bar s}\in C_{r}$ if $\bar x_{r(s)}\in \phi_r$ (or $\bar y_{r(s)}\in \phi_r$); and (2) communities $C_{m+1}, \ldots, C_{m+\nu}$, with $C_{m+i}= \{v_i, \bar v_i\}$ for each $i \in [\nu]$.
We set $k=\mu+1$, $b=\nu$ and $t=1/3$. 
\begin{figure}[ht]
	\centering
	\resizebox{0.95\columnwidth}{!}{
	    \begin{tikzpicture}[scale=1.1]
			\tikzset{vertex/.style = {shape=circle,draw = black,thick,fill = white, minimum size=1cm}}
			\tikzset{edge/.style = {->,> = latex'}}
			
			\node[vertex] (q) at  (-5.5,-3.55) {$q$};

			\path [draw = black, rounded corners, inner sep=100pt]
			(-6.5, -5.1) -- (-4.5, -5.1) -- (-4.5, -5.7) -- (-6.5, -5.7) -- cycle;
			\node (P) at  (-5.5, -5.2) {};
			\node (Pl) at (-5.5, -5.4) {$P$}; 

			\draw[edge] (q) to[left] node {} (P);
			
			\path [draw = black, rounded corners, inner sep=100pt, dashed]
			(-6.7, -2.9) -- (-4.3, -2.9) -- (-4.3, -5.9) -- (-6.7, -5.9) -- cycle;
			\node  (empty) at (-6.5, -2.65)  {\Large $U$};

			\node[vertex] (vfa1) at  (-2.7, 1.2) {$ v_1$};
			\node[vertex] (vfanu) at  (-2.7,-1.2){$ v_\nu$};
			
			\node[vertex] (vfa1n)  at  (-.5, 1.2) {$\bar  v_1$};
			\node[vertex] (vfanun) at  (-.5,-1.2){$ \bar  v_\nu$};
			
			\node (d1) at (-1.5, -.1) {$\vdots$};

			\path [draw = black, rounded corners, inner sep=100pt, dotted]
			(-3.3, 1.7) -- (.1, 1.7) -- (.1, 0.4) -- (-3.3, 0.4) -- cycle;
			\node  (empty) at (-2.8, .6)  {\large $C_{m+1}$};
			
			\path [draw = black, rounded corners, inner sep=100pt, dotted]
			(-3.3, -0.7) -- (.1, -0.7) -- (.1, -2) -- (-3.3, -2) -- cycle;
			\node  (empty) at (-2.8, -1.8)  {\large $C_{m+\nu}$};
			
			\path [draw = black, rounded corners, inner sep=100pt, dashed]
			(-3.5, 1.9) -- (.3, 1.9) -- (.3, -2.2) -- (-3.5, -2.2) -- cycle;
			\node  (empty) at (-3.9, .4)  {\Large $V^{\exists}$};

			\node[vertex] (ve1) at  (-2.7,-3.5) {$v_{\nu+1}$};
			\node[vertex] (ve1n) at  (-.5,-3.5) {$\bar v_{\nu+1}$};
			\path [draw = black, rounded corners, inner sep=100pt]
			(-2.6, -4.3) -- (-.6, -4.3) -- (-.6, -4.9) -- (-2.6, -4.9) -- cycle;
			\node (L1) at  (-1.6, -4.4) {};
			\node (L1l) at  (-1.6, -4.6) {$L_{\nu + 1}$};

			\node[vertex] (vemu) at  (-2.7, -6.7){$v_\mu$};
			\node[vertex] (vemun) at  (-.5, -6.7){$ \bar v_\mu$};
			\path [draw = black, rounded corners, inner sep=100pt]
			(-2.6, -7.5) -- (-.6, -7.5) -- (-.6, -8.1) -- (-2.6, -8.1) -- cycle;
			\node (Lmu) at  (-1.6,-7.6) {};
			\node (Lmul) at  (-1.6,-7.8) {$L_\mu$};
            \node () at (-1.5, -5.3) {$\vdots$};
            
			\draw[edge] (ve1) to[left] node   {} (L1);
			\draw[edge] (ve1n) to[right] node   {} (L1);

			\draw[edge] (vemu) to[left] node {} (Lmu);
			\draw[edge] (vemun) to[right] node {} (Lmu);
			
			\path [draw = black, rounded corners, inner sep=100pt, dashed]
			(-3.5, -2.8) -- (.3, -2.8) -- (.3, -8.4) -- (-3.5, -8.4) -- cycle;
			\node  (empty) at (-3.9, -7.6)  {\Large $V^{\forall}$};

			\node[vertex] (u11)  at  (3,-1) {$w^1_{1}$};
			\node[vertex] (u1n2) at  (4,-1) {$w^1_{\bar 2}$};
			\node[vertex] (u13)  at  (5,-1) {$w^1_{3}$};
			\node[vertex] (u1n1) at  (6,-1) {$w^1_{\bar 1}$};
			\node[vertex] (u12)  at  (7,-1) {$w^1_2$};
			\node[vertex] (u1n3) at  (8,-1) {$w^1_{\bar 3}$};

			\node[vertex] (um1)at (3,-6.5) {$w^m_1$};
			\node[vertex] (um1)at (4,-6.5) {$w^m_{2}$};
			\node[vertex] (um1)at (5,-6.5) {$w^m_{\bar 3}$};
			\node[vertex] (um1)at (6,-6.5) {$w^m_{\bar 1}$};
			\node[vertex] (um1)at (7,-6.5) {$w^m_{\bar 2}$};
			\node[vertex] (um1)at (8,-6.5) {$w^m_{3}$};
			
			\path [draw = black, rounded corners, inner sep=100pt, dotted]
			(2.24, -0.5) -- (5.5, -0.5) -- (5.5, -1.5) -- (2.24, -1.5) -- cycle;
			\node  (empty)  at (2.45, -1.3)  {\large $C_1$};

			\path [draw = black, rounded corners, inner sep=100pt, dotted]
			(2.19, -6) -- (5.5, -6) -- (5.5, -7) -- (2.19, -7) -- cycle;
			\node  (empty) at (2.45, -6.8)  {\large $C_m$};
			
			\path [draw = black, rounded corners, inner sep=100pt, dashed]
			(2.1, -0.3) -- (8.6, -0.3) -- (8.6, -7.2) -- (2.1, -7.2) -- cycle;
			\node  (empty) at (8.2, -7.6)  {\Large $W$};
            \node () at (5.6, -3.5) {$\vdots$};
			
			\draw[edge] (vfa1) to[above] node   {} (u11);
			\draw[edge] (ve1n) to[below] node   {} (u1n2);
			\draw[edge] (vemu) to[above] node   {} (u13);

			\draw[edge] (vfa1n) to[above] node   {} (u1n1);
            \draw[edge] (ve1)   to[below] node   {} (u12);
			\draw[edge] (vemun) to[above] node   {} (u1n3);

	    \end{tikzpicture}
	}
	\caption{Construction of $G$ from a $\STWOSAT$ instance. Only the edges to the nodes corresponding to the first clause $\phi_1$ are drawn. All drawn edges have weight $1$. The only edges that are not in $G$ are the ones from $q$ to $V^\exists$.}\label{fig: sigma2preduction}
\end{figure}
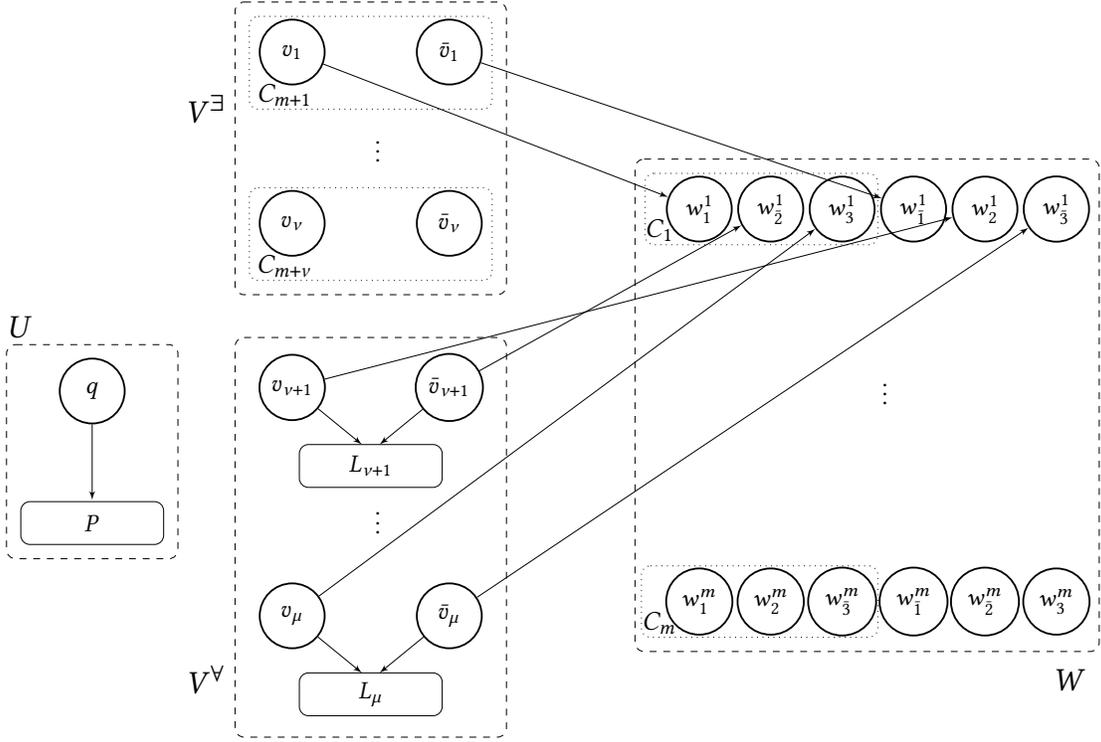

Our goal is now to show that the \STWOSAT instance is a yes-instance if and only if the constructed \FIMAL instance is. We first need the following lemma.

\begin{restatable}{lemma}{lemmasmax}\label{lemma-Smax}
	Let $F \subseteq \bar E = E(q, V^\exists)$ with $|F| \le \nu$. 
	It holds that $S\in \MMM(F, \mu + 1)$ if and only if $q\in S$ and $S\cap \{v_j, \bar v_j\}\neq \emptyset$ for all $j \in [\mu]\setminus [\nu]$. 
\end{restatable}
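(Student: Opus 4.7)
I would prove the \emph{if and only if} by a careful spread-comparison argument that exploits the value $M = \mu + \nu + 6m + 1$. The essential size bound is that $M$ dominates the contribution any single seed can make outside the ``heavy'' in-components $U = \{q\} \cup P$ (of size $M$) and $\{v_j, \bar v_j\} \cup L_j$ (of size $M$ per pair $j \in [\mu]\setminus[\nu]$). Indeed, any single seed not of the form $q$, $v_j$, or $\bar v_j$ reaches at most $1 + |V^\exists| + |W| = 1 + 2\nu + 6m < M$ nodes.

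For the forward direction ($\Rightarrow$), I would assume $S \in \MMM(F, \mu + 1)$ and argue by contradiction using a swap. The $M$ nodes of $U$ are reachable only from $q$: the only nonzero-weight edges into $U$ are those of $E^P$ emanating from $q$, and $q$ itself has no positive-weight in-edge since all edges into $q$ lie in $Z$. Hence $q \notin S$ would force the $U$-contribution to $\sigma(S,F)$ to be at most $|S \cap U| \le \mu + 1 < M$. Replacing a cheap seed in $S$ by $q$ would gain at least $M - (\mu+1) = \nu + 6m$ nodes from $U$ while losing at most $1 + 2\nu + 6m$, and then the bound $M > 1 + 2\nu + 6m$ yields a strict improvement, contradicting maximality. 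The same style of argument applies to each pair $\{v_{j^*}, \bar v_{j^*}\}$ with $j^* \in [\mu]\setminus[\nu]$: the $M-2$ nodes of $L_{j^*}$ are reachable only from $\{v_{j^*}, \bar v_{j^*}\} \cup L_{j^*}$, so if both $v_{j^*}$ and $\bar v_{j^*}$ are missing from $S$ then $L_{j^*}$ contributes at most $|S \cap L_{j^*}| \le \mu$; swapping a cheap seed for $v_{j^*}$ would gain at least $M - 1 - \mu$ and again yield a strict improvement.

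For the backward direction ($\Leftarrow$), I would decompose the spread as
\[
\sigma(S, F) \;=\; M \;+\; |S \cap V^\forall| \;+\; (\mu - \nu)(M-2) \;+\; R(S, F),
\]
where $R(S,F)$ counts the $V^\exists \cup W$-nodes reached via $F$, via the seeds in $V^\forall \cap S$, and via the remaining $\nu$ seeds. Given the assumptions, the first three summands are fixed ($U$ is fully covered by $q$, and every $L_j$ is fully covered by the pair-intersection). The plan is then to show that $|S \cap V^\forall| + R(S, F)$ attains a single value, namely the maximum over all choices of pair-representatives and remaining seeds consistent with $|S| = \mu + 1$. Concretely, I would argue that for any such $S$ one can pair up the $\mu - \nu$ ``remaining'' seeds with the $V^\exists$-pairs and the $V^\forall$-pairs to saturate the $V^\exists \cup V^\forall$-reach (and hence the $W$-reach through $E^{\var}$) to the maximum value determined by $F$ alone.

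The main obstacle is precisely this invariance claim in the backward direction: swapping $v_j$ for $\bar v_j$ inside a pair changes which $W$-nodes are reached (the ``$s$'' versus ``$\bar s$'' copies), and placing the remaining $\nu$ seeds differently changes which $V^\exists$-nodes are covered. To finish, I would carefully track the contribution of $E^{\var}$ edges to argue that any $W$-nodes lost by one pair-choice can be recovered by a matching placement of a remaining seed on the complementary $V^\exists$ or $V^\forall$ node, so that the total $V^\exists \cup W$-reach is the $F$-dependent but $S$-independent value that witnesses $S \in \MMM(F, \mu + 1)$.
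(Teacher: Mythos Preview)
Your forward-direction swap is a reasonable alternative to the paper's approach (the paper simply bounds $\sigma$ from below on $P$-sets by $M+(M-1)(\mu-\nu)$ and from above on non-$P$-sets by $n-M$, then compares), but the arithmetic you wrote does not close. You claim a $U$-gain of at least $M-(\mu+1)=\nu+6m$ against a loss of at most $1+2\nu+6m$; since $\nu+6m<1+2\nu+6m$, these two bounds alone give no strict improvement, and the subsequent appeal to ``$M>1+2\nu+6m$'' is a non sequitur---the gain you established is $M-(\mu+1)$, not $M$. The slack is in your loss bound: a cheap seed in $V^\exists$ reaches only itself plus at most $m$ nodes of $W$ (one per clause containing that variable), and every other cheap seed reaches only itself, so the loss is really $\le 1+m$; with that the swap does go through.

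The genuine gap is the backward direction. You correctly identify the obstacle, but your proposed resolution proves the wrong statement. Arguing that ``any $W$-nodes lost by one pair-choice can be recovered by a matching placement of a remaining seed'' shows at best that \emph{some} $P$-set attains the maximum spread; it does not show that the \emph{given} $S$ does, because property $P$ places no constraint whatsoever on the remaining seeds---they may sit in $P$, in $W$, in already-covered $L_j$'s, or be absent altogether (only $|S|\le\mu+1$ is required), in which case nothing is ``recovered''. The paper's argument is structurally different and avoids any matching: it partitions $W=W^\exists\,\dot\cup\,W^\forall$ according to whether the unique positive-weight in-edge comes from $V^\exists$ or from $V^\forall$, and then uses that for every $j\in[\mu]\setminus[\nu]$ the nodes $v_j$ and $\bar v_j$ have the \emph{same number} of out-neighbours in $W^\forall$ (one per clause-position with variable index $j$). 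This symmetry is what makes the $W$-count independent of the pair-choice and lets the paper write $\sigma(S,F)=\sigma(\{q\},F)+\tfrac{|W^\forall|}{2}+(M-1)(\mu-\nu)$ directly; your plan never isolates it.
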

\begin{proof}
    Fix a set $F$ as in the statement of the lemma and let us call $P(S)$ for the property that $q\in S$ and $S\cap \{v_j, \bar v_j\}\neq \emptyset$ for all $j \in [\mu]\setminus [\nu]$.
    ($\Rightarrow$)~First note that any set $S$ that satisfies $P(S)$, achieves 
    $
        \sigma(S, F)
        \ge M + (M-1)(\mu - \nu)
    $
    and that a set $T$ that does not satisfy $P(T)$ achieves $\sigma(T, F)\le n - M$. Now, notice that $n= M + 2\nu + (\mu - \nu) M + 6m$ and thus $\sigma(T, F)\le 2\nu + (\mu - \nu) M + 6m$. Using that $M=\mu + \nu + 6m + 1$, shows that $\sigma(S, F)> \sigma(T, F)$. This shows that $T$ cannot be in $\MMM(F, k)$ and thus this completes the proof of this direction.
    ($\Leftarrow$)~It is enough to show that all sets that satisfy property $P(S)$ achieve the same value $\sigma(S, F)$.  From the construction of $E^{\var}$ it follows that the set $W$ can be partitioned into $W^\forall$ and $W^\exists$ in a way that the nodes in $W^\forall$ have an in-edge from a node in $V^\forall$, while the nodes in $W^\exists$ have an in-edge from $V^\exists$. Now, let $S$ be an arbitrary set satisfying property $P(S)$. It then follows that \[
        \sigma(S, F) = \sigma(\{q\}, F) + \frac{|W^\forall|}{2} + (M - 1)(\mu-\nu). 
    \]
    As the latter does not depend on $S$ the proof is complete.
\end{proof}
We are now ready to prove the theorem.
\begin{theorem}\label{thm: dFIMAE sigma2p}
    The decision version of \FIMAL is $\Sigma_2^p$-hard even in the deterministic case.
\end{theorem}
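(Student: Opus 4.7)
The plan is to prove correctness of the reduction by establishing that the constructed \FIMAL instance is a yes-instance if and only if $\exists x \forall y \phi(x,y) = \top$, exploiting \cref{lemma-Smax} to tightly characterize the set of size-$(\mu+1)$ maximizers. The intuition is that $F$ encodes the existential assignment $x$ (one edge from $q$ per variable in $X$) while seed sets $S \in \MMM(F, k)$ encode universal assignments $y$ (at least one of $\{v_j, \bar v_j\}$ per variable in $Y$); the communities $C_r$ force clause satisfaction while the communities $C_{m+i}$ force $F$ to behave as an honest assignment of $X$.

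For the forward direction, given a witness $x^*$ for the \STWOSAT instance, I would define $F^* := \{(q, v_i) : x^*_i = 1\} \cup \{(q, \bar v_i) : x^*_i = 0\}$, so $|F^*| = \nu = b$. For any $S \in \MMM(F^*, \mu+1)$, \cref{lemma-Smax} yields $q \in S$ and, for each $j \in [\mu]\setminus[\nu]$, at least one of $v_j, \bar v_j \in S$; I then derive $y^*$ by setting $y^*_j = 1$ if $v_j \in S$ and $y^*_j = 0$ otherwise. Since $\phi(x^*, y^*) = \top$, each clause $\phi_r$ has a satisfied literal $\ell$, and I verify that the node in $C_r$ associated with $\ell$ is reached from $S$ in $G' = (V, E \cup F^*)$: literals over existential variables are covered via the path $q \to v_i^* \to w^r_{(\cdot)}$ using $F^*$ and $E^{\var}$, and literals over universal variables are covered via a direct edge from $v_j$ or $\bar v_j \in S$. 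Hence $\sigma_{C_r}(S, F^*) \ge 1/3$; and each $C_{m+i} = \{v_i, \bar v_i\}$ satisfies $\sigma_{C_{m+i}}(S, F^*) = 1/2 \ge t$ by construction of $F^*$.

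For the reverse direction, let $F^*$ be a witness with $|F^*| \le \nu$. Since $\bar E = E(q, V^\exists)$ and \cref{lemma-Smax} forces $q \in S$ for every maximizer, the only way to achieve $\sigma_{C_{m+i}}(S, F^*) \ge 1/3$ in the deterministic case (equivalently, at least one node of $\{v_i, \bar v_i\}$ reached from $S$) is to include at least one of $(q, v_i), (q, \bar v_i)$ in $F^*$. Together with the budget $|F^*| \le \nu$, this forces $F^*$ to contain \emph{exactly} one edge per existential pair, yielding a well-defined assignment $x^*$ via $x^*_i = 1 \iff (q, v_i) \in F^*$. To check $\forall y\,\phi(x^*, y)$, I fix an arbitrary $y^*$, construct $S^* := \{q\} \cup \{v_j : y^*_j = 1\} \cup \{\bar v_j : y^*_j = 0\}$, which has size $\mu - \nu + 1 \le k$ and satisfies the property $P$ of \cref{lemma-Smax}, so $S^* \in \MMM(F^*, k)$. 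The fairness guarantee gives $\sigma_{C_r}(S^*, F^*) \ge 1/3$, i.e., at least one of the three $w$-nodes in $C_r$ is reachable from $S^*$; tracing back the unique in-edge of that $w$-node (via $E^{\var}$) to either an element of $S^*$ or to an element reached from $q$ via $F^*$ exhibits a literal in $\phi_r$ satisfied by $(x^*, y^*)$. Thus $\phi(x^*, y^*) = \top$ for every $y^*$.

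The main technical obstacle is not any single inequality but the careful bookkeeping aligning three different correspondences simultaneously: edges in $F$ to truth values of $X$, seeds in $S$ to truth values of $Y$, and reachability of $w^r_s$ or $w^r_{\bar s}$ to satisfaction of the corresponding literal in $\phi_r$. A subtle point is that in the reverse direction a maximizer $S$ may contain \emph{both} $v_j$ and $\bar v_j$ for some $j$ (filling the $\nu$ slack seats permitted by $k = \mu + 1$), which would not correspond to a valid assignment; this is circumvented by explicitly constructing the canonical $S^*$ of size $\mu - \nu + 1$ from the universal assignment, rather than reasoning about arbitrary maximizers in this direction.
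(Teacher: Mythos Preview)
Your proposal is correct and follows essentially the same approach as the paper: both directions hinge on \cref{lemma-Smax}, with $F$ encoding the existential assignment via edges out of $q$ and maximizing seed sets encoding the universal assignment via their intersection with $V^\forall$. Your write-up is in fact more explicit than the paper's in handling the subtlety that a maximizer $S$ need not correspond cleanly to a single assignment $y$ (you extract $y^*$ carefully in the forward direction and build a canonical $S^*$ in the reverse direction), and your trace-back of reachability through $E^{\var}$ spells out what the paper leaves implicit; the only cosmetic slip is that $\sigma_{C_{m+i}}(S,F^*)$ should be $\ge 1/2$ rather than $=1/2$, since a maximizer may also seed a node of $V^\exists$.
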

\begin{proof}

    We show that the \STWOSAT instance is a yes-instance if and only if the constructed \FIMAL instance is.

    ($\Rightarrow$)~Assume that the \STWOSAT instance is a yes-instance, i.e, there exists an assignments $x$ to the variables $X$ such that for all assignment $y$ to the variables $Y$, it holds that $\phi(x, y)=\top$. We will now show that there exists \(F \subseteq \bar E\) with \(|F| \le \nu\) such that for all $S\in \MMM(F, \mu + 1)$, it holds that \(\min_{C\in \CCC} \sigma_C(S, F)\ge 1/3\). Let \(F \subseteq \bar E = E(q, V^\exists)\) be equal to the set of edges from $q$ to $V^\exists$ that correspond to the assignment $x$. Now, let $S\in\MMM(F, \mu + 1)$ be arbitrary. It then follows using~\cref{lemma-Smax} that $S=\{q\} \cup \dot S$, where $\dot S$ corresponds to an assignment $y$ of $Y$. As $\phi(x, y)=\top$ it follows that, for every clause $\phi_r$ at least one literal is true, thus for every community $C_r$ with $r\in [m]$, at least one node $w\in C_r$ is reached and hence $\sigma_{C_r}(S, F)\ge 1/3$. For communities $C_i$ with $i\in [m+1, m+\nu]$, we obtain that $\sigma_{C_i}(S, F)\ge 1/2$, as $F$ corresponds to an assignment and $S$ contains $q$ according to~\cref{lemma-Smax}.
    
    ($\Leftarrow$)~Now, assume that the \FIMAL instance admits a solution \(F \subseteq \bar E\) with \(|F| \le \nu\) such that for all $S\in \MMM(F, \mu + 1)$, it holds that \(\min_{C\in \CCC} \sigma_C(S, F)> 0\). Notice that $\sigma_C(S, F)>0$ for every $S\in \MMM(F, \mu + 1)$ together with~\cref{lemma-Smax} implies that $F$ consists of a set of edges to $V^\exists$ that corresponds to an assignment. Let now $y$ be an arbitrary assignment to $Y$ and let $S$ be the set containing $q$ and all nodes from $V^\forall$ that correspond to the assignment $y$. Again using~\cref{lemma-Smax} it follows that $S\in \MMM(F, \mu + 1)$ and thus $\sigma_{C_r}(S, F) > 0$ for all $r\in [m]$. This means that at least one node in every community $C_i$ is reached or equivalently at least one literal in every clause $\phi_r$ is true in the assignments $x$ and $y$. It follows that $\phi(x,y)=\top$.
\end{proof}
From the same reduction, we can even conclude that it is unlikely to find an arbitrary approximation to \FIMAL as shown in the next theorem. 
The class $\Delta_2^p$ is the class of all languages decided by polynomial-time Turing Machines that have access to an oracle for some \NP-complete problem. It is widely believed that $\Sigma_2^p$ and $\Delta_2^p$ are distinct (see Section~17.2 in~\cite{booksChristos}).
\begin{restatable}{theorem}{fimalinapx}\label{thm: fimal inapx}
    Let $\alpha\in (0,1]$. If computing an $\alpha$-approximation to \FIMAL is in $\Delta_2^p$, then $\Sigma_2^p=\Delta_2^p$.
\end{restatable}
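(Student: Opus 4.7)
The plan is to show that the reduction in the proof of \cref{thm: dFIMAE sigma2p} is \emph{gap-producing}: yes-instances of \STWOSAT map to \FIMAL instances whose optimum is at least $1/3$, while no-instances map to instances whose optimum is exactly $0$. Combined with the hypothesis that some $\alpha$-approximation lies in $\Delta_2^p$, this gap will let us decide \STWOSAT in $\Delta_2^p$, forcing $\Sigma_2^p \subseteq \Delta_2^p$ and hence equality.

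First, I would extract the gap directly from \cref{thm: dFIMAE sigma2p}. Its $(\Rightarrow)$ direction exhibits a feasible $F$ with $\min_{C\in\CCC}\sigma_C(S,F) \ge 1/3$ for every $S \in \MMM(F,\mu+1)$, which witnesses $\opt_{\al}(G,\CCC,b,k) \ge 1/3$ whenever \STWOSAT is a yes-instance. The contrapositive of its $(\Leftarrow)$ direction says that in a no-instance, every feasible $F\subseteq\bar E$ with $|F|\le\nu$ admits some $S\in\MMM(F,\mu+1)$ and some community $C\in\CCC$ with $\sigma_C(S,F) = 0$, so $\opt_{\al}(G,\CCC,b,k) = 0$ exactly.

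Next, I would invoke the hypothetical $\alpha$-approximation $A\in\Delta_2^p$ on the constructed \FIMAL instance. In a yes-instance the output $F$ has objective at least $\alpha\cdot \opt_{\al} \ge \alpha/3 > 0$; in a no-instance the objective of $A$'s output is at most $\opt_{\al}=0$ and hence equals $0$. Therefore deciding \STWOSAT reduces to checking whether the $F$ produced by $A$ has strictly positive \FIMAL-objective.

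The hard part will be showing that this final check lies in $\Delta_2^p$, since a naive reading of the objective $\min_{S\in\MMM(F,k)}\min_{C\in\CCC}\sigma_C(S,F)$ has a $\Pi_2^p$-style quantifier alternation. In the deterministic instances produced by the reduction I would proceed by first computing $\sigma^\ast := \max_{|S|\le k}\sigma(S,F)$ by binary search with \NP-oracle calls, and then posing the single \NP-query ``does there exist $S$ with $|S|\le k$, $\sigma(S,F)=\sigma^\ast$, and some $C\in\CCC$ with $\sigma_C(S,F)=0$?''. A yes-answer certifies that the objective equals $0$, a no-answer certifies that it is strictly positive. The whole procedure uses polynomially many \NP-oracle calls and therefore lies in $\Delta_2^p$. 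Combined with the $\Sigma_2^p$-hardness of \STWOSAT this yields $\Sigma_2^p\subseteq\Delta_2^p$, and hence $\Sigma_2^p=\Delta_2^p$.
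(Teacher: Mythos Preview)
Your proposal is correct and follows essentially the same approach as the paper: both extract from the reduction of \cref{thm: dFIMAE sigma2p} the gap that yes-instances have optimum $\ge 1/3$ while no-instances have optimum $=0$, run the hypothesized $\Delta_2^p$ $\alpha$-approximation on the constructed (deterministic) instance, and then decide whether the returned $F$ has strictly positive objective by first computing $\sigma^\ast=\max_{|S|\le k}\sigma(S,F)$ with \NP-oracle calls and then asking the single \NP-query whether some maximizer $S$ has $\min_{C}\sigma_C(S,F)=0$. The only cosmetic difference is that you use binary search for $\sigma^\ast$ whereas the paper simply enumerates all integer values $\tau\in[|V|]$; both are polynomial in the deterministic setting.
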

\begin{proof}
    Note that we have shown above that the \STWOSAT instance is a yes-instance if and only if the constructed \FIMAL instance admits a solution \(F \subseteq \bar E\) with \(|F| \le \nu\) such that for all $S\in \MMM(F, \mu + 1)$, it holds that \(\min_{C\in \CCC} \sigma_C(S, F)> 0\). Note also that the \FIMAL instance there is deterministic.
    
    Now, let $\alpha\in (0,1]$ and assume that we have an algorithm computing an $\alpha$-approximation to \FIMAL that runs in polynomial time when given access to an oracle for some \NP-complete problem, i.e., computing an $\alpha$-approximate solution to \FIMAL is in $\Delta_2^p=\PP^{\NP}$. Given a \STWOSAT instance, we can then build the \FIMAL instance as described and compute an $\alpha$-approximation to it. We then get a set \(F \subseteq \bar E\) with \(|F| \le \nu\) such that for all $S\in \MMM(F, \mu + 1)$, it holds that \(\min_{C\in \CCC} \sigma_C(S, F) \ge \alpha \cdot \opt_{\al}(G, \CCC, b, k)\). Therefore, the original \STWOSAT instance is a yes-instance if and only if \(\min_{C\in \CCC} \sigma_C(S, F)>0\), for all $S\in \MMM(F, \mu + 1)$, and, if we can check this last condition, then we can decide whether the \STWOSAT instance is a yes-instance.
    We now show how to check this condition by using a polynomial number of calls to an oracle for some $\NP$-complete problem.

    We equivalently show how to check whether there exists a solution $S\in \MMM(F, \mu + 1)$ such that \(\min_{C\in \CCC} \sigma_C(S, F) =0\). In deterministic instances, it is $\NP$-complete to check whether there exists a seed set $S$ such that $\sigma(S, F) \geq \tau$, for some parameter $\tau$. We can then, using a polynomial number of calls to the oracle, find an $S\in \MMM(F, \mu + 1)$. In fact, since the instance is deterministic, it is enough to guess all $\tau\in[|V|]$. Let now $\tau^* = \sigma(S, F)$. Then we again use an oracle to solve the $\NP$-complete problem of checking whether there exists a seed set $S$ such that $\sigma(S, F) = \tau^*$ and \(\min_{C\in \CCC} \sigma_C(S, F) =0\). As the above algorithm overall requires a polynomial number of calls to the oracle, the proof is complete.
\end{proof}

\paragraph{Still Hard Special Cases.}
While we have shown above that the general problem is $\Sigma_2^p$-hard, we will now show that not even in the apparently simple case where $k=1$, we can hope to find any approximation unless $\PP=\NP$.
\begin{restatable}{theorem}{fimaeinapxk} \label{thm: FIMAE inapx k1}
	For any, $\alpha\in (0, 1]$, it is \NP-hard to approximate \FIMAL to within a factor of $\alpha$, even in the deterministic case and if $k=1$.
\end{restatable}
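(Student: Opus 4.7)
The plan is to give a polynomial-time reduction from $3$-SAT to the $\alpha$-approximation of the deterministic, $k=1$ variant of \FIMAL. I reuse---and significantly simplify---the gadget construction of~\cref{thm: dFIMAE sigma2p}: given a 3-CNF formula $\phi$ on variables $X=(x_1,\ldots,x_\nu)$ with $m$ clauses $\phi_1,\ldots,\phi_m$, I drop the universally quantified part ($V^\forall$ and the $L_j$ gadgets) entirely. The resulting graph $G$ has node groups $U=\{q\}\cup P$, $V^\exists=\{v_i,\bar v_i : i\in[\nu]\}$, and $W=\{w^r_s, w^r_{\bar s} : r\in[m], s\in[3]\}$, with weight-$1$ edges from $q$ to every node of $P$ and, for every clause $\phi_r$ and position $s\in[3]$, from $v_{r(s)}$ to $w^r_s$ and from $\bar v_{r(s)}$ to $w^r_{\bar s}$; the padding $P$ has size $|P|>m$. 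All remaining pairs are included in $E$ with weight $0$ except those in $E(q, V^\exists)$, which form $\bar E$. I set $b=\nu$, $k=1$, and let $\CCC$ consist of a three-element clause community $C_r$ per clause (as in the original construction) and a two-element variable community $C_{m+i}=\{v_i,\bar v_i\}$ per variable $x_i$.

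The first step is to show that $\MMM(F,1)=\{q\}$ for \emph{every} $F\subseteq \bar E$. This is immediate from counting: $q$ reaches at least $1+|P|$ nodes, while every other node reaches at most $1+m$ nodes, because each variable node $v_i$ or $\bar v_i$ can reach at most one $w$-node per clause, and all nodes in $P\cup W$ are sinks. With $|P|>m$, this makes $q$ the strict maximizer, so the efficiency-oriented seeder is forced to pick $q$ and the \FIMAL objective collapses to $\min_{C\in\CCC}\sigma_C(\{q\},F)$.

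The second step reproduces the assignment-encoding argument from the proof of~\cref{thm: dFIMAE sigma2p} in its simpler, existential-only form. Variable community $C_{m+i}$ is covered only through an edge of $F$ landing on $\{v_i,\bar v_i\}$; with $|F|\le\nu$ and $\nu$ such communities to cover, a positive $\min_C\sigma_C(\{q\},F)$ forces $F$ to contain exactly one arc per variable, thereby encoding an assignment $x$ of $X$. For this $x$, clause community $C_r$ is positively covered iff at least one literal of $\phi_r$ is satisfied by $x$, since the only paths from $q$ into $W$ go through the enabled literal nodes. Consequently $\opt_{\al}(G,\CCC,\nu,1)>0$ exactly when $\phi$ is satisfiable, and $\opt_{\al}(G,\CCC,\nu,1)=0$ otherwise. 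Hence, for any $\alpha\in(0,1]$, an $\alpha$-approximation algorithm returns a value $\ge \alpha\cdot\opt_{\al}$ that is strictly positive iff $\opt_{\al}>0$, iff $\phi$ is satisfiable, so it would decide $3$-SAT in polynomial time.

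The main obstacle is ensuring that the uniqueness-of-maximizer claim is robust to every choice of $F$: since adding arcs from $q$ to $V^\exists$ only increases $q$'s spread (and never anyone else's), it suffices to bound the spread of non-$q$ seeds \emph{independently} of $F$, which the sink-only padding set $P$ attached solely to $q$ cleanly achieves. Everything else is a direct pruning of the reduction used for the $\Sigma_2^p$-hardness proof.
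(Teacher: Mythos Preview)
Your proof is correct but takes a different route from the paper. The paper reduces from \textsc{Set Cover} rather than $3$-SAT: it builds a bipartite ``set--element'' graph with an extra vertex $q$, uses \emph{singleton} communities $\{q\}$ and $\{u_i\}$, and relies on the community $\{q\}$ to force the maximizing singleton seed to be $\{q\}$ whenever the objective is positive. You instead prune the $\Sigma_2^p$ gadget of \cref{thm: dFIMAE sigma2p}, keep the clause/variable communities, and add a padding set $P$ with $|P|>m$ so that $\MMM(F,1)=\{\{q\}\}$ holds unconditionally for every $F$. Your approach has the pedagogical advantage of visibly recycling the earlier construction and makes the ``$q$ is the maximizer'' step cleaner (it is independent of $F$). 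The paper's choice of \textsc{Set Cover}, however, is not incidental: the same reduction is later invoked to argue $\operatorname{W}[2]$-hardness of \FIMALG with respect to $b$ (since $b=\kappa$ there), a consequence your $3$-SAT reduction cannot deliver because $3$-SAT parameterized by the number of variables is trivially in $\FPT$. One small point of presentation: your last sentence speaks of the approximation algorithm ``returning a value''; strictly it returns a set $F$, but since you have already established $\MMM(F,1)=\{\{q\}\}$ for all $F$, the objective $\min_{C}\sigma_C(\{q\},F)$ is computable in polynomial time, so the distinction is harmless.
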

\begin{proof}
    We reduce from \textsc{Set Cover}, where we are given a collection of sets $\DDD=\{D_1, \ldots, D_\mu\}$ over a ground set $\UUU = \{U_1, \ldots, U_\nu\}$ and an integer $\kappa$, and the task is to decide whether there exists a set cover of size at most $\kappa$, i.e., a collection $\SSS\subseteq \DDD$ with $|\SSS|\le \kappa$ such that $\bigcup_{D\in \SSS} D = \UUU$.

    Given a \textsc{Set Cover} instance, we create an instance $(G, \CCC, b, 1)$ of \FIMAL as follows. The graph $G=(V, E, w)$ has node set $V:=A\cup B \cup \{q\}$, where $A:=\{v_{1}, \ldots, v_{\mu}\}$ and $B:=\{u_{1}, \ldots, u_{\nu}\}$ and edge set $E:=E^{sc} \cup Z$,
    where $E^{sc}:=\{(v_j, u_i): U_i\in D_j\}$ and $Z:= V^2 \setminus (E^{sc} \cup E_{q,A})$, where $E_{q,A}:=\{q\}\times A$. The edge-weight function $w$ is defined as $w_e=1$ for $e\in E^{sc}\cup E_{q,A}$ and $w_e = 0$ otherwise, i.e., for $e\in Z$.
	The communities $\CCC$ consist of $\nu+1$ singletons $C_q= \{q\}$ and $C_i = \{u_i\}$ for $i\in[\nu]$. We set $b=\kappa$.
	
    We now show that there exists a set cover $\SSS$ of size at most $\kappa$ if and only if there exists a set of non-edges $F \subseteq \bar E$ with $|F|\le b$, such that $\min_{C\in \CCC}\sigma_C(S, F)\ge 1$ for all $S\in\MMM(F, k)$: (``$\Rightarrow$'')~Assume that there exists a set cover $\SSS$ of size at most $\kappa$. Consider the set $F=\{(q, v_j) : D_j \in \SSS\}$ that is of cardinality at most $b=\kappa$.
    We now observe that $\MMM(F, k)=\{\{q\}\}$ and thus $\min_{C\in \CCC}\sigma_C(S, F)\ge 1$ for all $S\in \MMM(F, k)$ by the choice of $F$.
	(``$\Leftarrow$'')~Now assume that there exists a set $F\subseteq \bar E$ with $|F|\le b = \kappa$ such that $\min_{C\in \CCC}\sigma_C(S, F) \ge  1$ for all $S\in\MMM(F, k)$.
    Note that $F \subseteq \bar E = E_{q, A}$ and thus again $\MMM(F, k)=\{\{q\}\}$ and $\sigma(S, F)= \nu + \kappa + 1$ for all $S\in \MMM(F, k)$. Hence, it follows that $\{D_j : (q, v_j)\in F\}$ is a set cover of size at most $\kappa$.

	Now, let $\alpha\in (0,1]$ and assume that there exists a polynomial time $\alpha$-approximation algorithm $\mathcal{A}$ for \FIMAL. We obtain that if there is a set cover of size $\kappa$, then $\opt_{\al}(G, \CCC, b, k)=1$ and $\mathcal{A}$ outputs a set $F$ such that $\min_{C\in \CCC}\sigma_C(S, F)\ge \alpha \cdot \opt_{\al}(G, \CCC, b, k) > 0$ for all sets $S\in\MMM(F, k)$.
	If however there is no set cover of size $\kappa$, then $\opt_{\al}(G, \CCC, b, k)<1$ and as the instance is deterministic this means that  $\opt_{\al}(G, \CCC, b, k)=0$. Thus $\mathcal{A}$ must return a solution $F$ such that $\sigma_C(S, F) =0$ for some community $C\in \CCC$ and some set $S\in\MMM(F, k)$. Therefore, by using $\mathcal{A}$ we can decide in polynomial time whether or not there exists a set cover of size $\kappa$ by running $\mathcal{A}$ and then checking if there exists a community $C\in \CCC$ and a set $S\in\MMM(F, k)$ such that $\sigma_C(S, F) =0$. Note that we can compute $\MMM(F, k)$ in polynomial time by evaluation of all different $n$ choices -- recall that $k=1$. It follows that it is \NP-hard to approximate \FIMAL to within a factor of $\alpha$.
\end{proof}

A natural next question is whether the problem remains hard also if $b=1$. We show that this is the case:
\begin{restatable}{theorem}{fimaeinapxb} \label{thm: FIMAE inapx b1}
	The decision version of \FIMAL is \NP-hard even in the deterministic case and if $b=1$.
\end{restatable}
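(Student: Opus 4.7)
My plan is to reduce \textsc{Set Cover} -- which is NP-complete -- to the decision version of \FIMAL restricted to $b=1$ in the deterministic case. The key enabling observation is that the weight function $w$ is defined on all of $V^2$, not only on $E$: by arranging that every non-edge has weight $0$, I guarantee that any single non-edge added via $F$ is never live and hence does not alter the distribution over live-edge graphs, so $\sigma(\cdot,F)=\sigma(\cdot,\emptyset)$. This collapses the $\exists F$ quantifier (with $|F|\le 1$) into the ``do-nothing'' option and turns the \FIMAL decision into a pure $\forall S\in\MMM(\emptyset,k)$ fairness check.

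Given an instance $(\DDD=\{D_1,\ldots,D_\mu\},\UUU=\{U_1,\ldots,U_\nu\},\kappa)$ of \textsc{Set Cover}, I build a directed graph $G=(V,E,w)$ whose nodes are: a source-like node $q$ together with a padding set $P$ of $M-1$ nodes reachable from $q$ by weight-$1$ edges; for each set $D_j$, a set-node $a_j$ equipped with a private padding tail $L_j$ of $M$ nodes reachable via weight-$1$ edges; and for each element $U_i$, an element-node $b_i$ with a weight-$1$ edge $a_j\to b_i$ whenever $U_i\in D_j$. Every remaining pair in $V^2$ is a non-edge of weight $0$. The community structure is $\CCC=\{\{b_i\}:i\in[\nu]\}$, and I set $k=\kappa$, $t=1$, $b=1$, with $M$ chosen much larger than $\nu$ (say $M=2\nu+2$).

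The proof then has three steps. First, an analogue of Lemma~\ref{lemma-Smax} will characterize $\MMM(\emptyset,\kappa)$ as exactly the $\kappa$-element subsets of $\{a_1,\ldots,a_\mu\}$ that maximize $|\bigcup_{a_j\in S}D_j|$; the asymmetric padding $|P|=M-1<M=|L_j|$ makes each set-node individually strictly more attractive than $q$, and the choice of $M$ large compared with $\nu$ rules out competing compositions that mix in $q$, an element-node, or a padding node. Second, $\sigma_{b_i}(S,\emptyset)=1$ iff $U_i$ is covered by some $a_j\in S$, so the fairness condition $\min_{C\in\CCC}\sigma_C(S,\emptyset)\ge 1$ is equivalent to the selected $\kappa$ sets covering all of $\UUU$; combining the two steps, the \FIMAL instance is a yes-instance iff the optimum $\kappa$-set coverage equals $\nu$, iff the given \textsc{Set Cover} instance has a cover of size at most $\kappa$. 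Third, by the weight-$0$ collapse above, the $b=1$ and $b=0$ decision problems coincide on these instances, so NP-hardness of \FIMAL with $b=1$ follows from the NP-completeness of \textsc{Set Cover}.

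The main obstacle will be the padding calibration underlying the first step: I must verify strict spread dominance of the canonical $\kappa$-set-node seeds over every competing composition. Each comparison reduces to an inequality of the form ``an extra set-node contributes $M+1$, while $q$ or any $b_i$ contributes at most $M$'', which is easy to check once $M>\nu$, but the different competing compositions (those substituting $q$, substituting padding nodes, substituting element-nodes, or arbitrary mixtures) must be enumerated systematically, analogously to the case analysis that underlies Lemma~\ref{lemma-Smax}. Once that calibration is in place, the rest of the argument is a short bookkeeping exercise.
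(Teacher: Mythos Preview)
Your approach is correct but takes a genuinely different route from the paper's.

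Both proofs reduce from \textsc{Set Cover}, yet they exploit $b=1$ in opposite ways. You give every non-edge weight $0$, so the single allowed addition is inert and the $\exists F$ quantifier collapses: the $b=1$ decision coincides with the $b=0$ one. You then attach a private padding tail $L_j$ of size $M$ to each set-node $a_j$ so that $\MMM(\emptyset,\kappa)$ consists precisely of the coverage-maximizing $\kappa$-subsets of $\{a_1,\dots,a_\mu\}$, and the fairness threshold $t=1$ on the element singletons $\{b_i\}$ holds for every such maximizer iff the optimum $\kappa$-coverage equals $\nu$, i.e., iff a size-$\kappa$ cover exists. (Two remarks: your calibration is in fact easier than you anticipate---since $c_{S'}\le c^*$ always, the swap inequalities hold for any $M\ge 1$, not just $M=2\nu+2$; and the node $q$ with its $P$-block is superfluous, as it never enters a maximizer and $\{q\}$ is not a community.) The paper instead gives the non-edges weight~$1$ and lets the single added edge \emph{matter}: its graph has only $\mu+\nu+1$ nodes (set-nodes $A$, element-nodes $B$, and $q$) with $\bar E=B\times\{q\}$ and an extra community $\{q\}$. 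Adding one edge $(u,q)$ makes $q$ reachable exactly when $u$ is covered; since $A$-nodes have no positive-weight in-edges, every maximizer of $\sigma(\cdot,F)$ is a $\kappa$-subset of $A$, and the unique maximum $\kappa+\nu+1$ is attained exactly by size-$\kappa$ covers. Conversely, any edge in $F$ points into $q$, so fairness on the $\{u_i\}$ communities already forces a cover after replacing non-$A$ seeds. Your argument is conceptually cleaner and modular (collapse, then calibrate), at the cost of $\Theta(\mu M)$ padding nodes; the paper's is more economical in graph size and demonstrates that the hardness persists even when the added edge genuinely changes the maximizing sets, at the cost of a slightly more delicate maximizer analysis and the auxiliary community $\{q\}$.
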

\begin{proof}
    We reduce from \textsc{Set Cover}, where we are given a collection of sets $\DDD=\{D_1, \ldots, D_\mu\}$ over a ground set $\UUU = \{U_1, \ldots, U_\nu\}$ and an integer $\kappa$, and the task is to decide whether there exists a set cover of size at most $\kappa$, i.e., a collection $\SSS\subseteq \DDD$ with $|\SSS|\le \kappa$ such that $\bigcup_{D\in \SSS} D = \UUU$. W.l.o.g., we can assume that every $U_i$ appears in at least one set $D_j$ as otherwise the instance is trivially a no-instance.

	Given a \textsc{Set Cover} instance, we create an instance $(G, \CCC, 1, k, t)$ of the decision version of \FIMAL as follows (here $t$ denotes the threshold to be reached). The graph $G=(V, E, w)$ has node set $V:=A\cup B \cup \{q\}$, where $A:=\{v_{1}, \ldots, v_{\mu}\}$ and $B:=\{u_{1}, \ldots, u_{\nu}\}$ and edge set $E:=E^{sc} \cup Z$, where $E^{sc}:=\{(v_j, u_i): U_i\in D_j\}$ and $Z= V^2 \setminus (E^{sc} \cup E_{B,q})$ with $E_{B, q}:=B\times \{q\}$. The edge-weight function $w$ is defined as $w_e=1$ for $e\in E^{sc}\cup E_{B,q}$ and $w_e = 0$ otherwise, i.e., for $e\in Z$.
	The community structure $\CCC$ consists of $\nu+1$ singleton communities $C_q= \{q\}$ and $C_i = \{u_i\}$ for every $i\in[\nu]$. We set $k=\kappa$ and $t=1$.
	
    We now show that the set cover instance is a yes-instance if and only if the \FIMAL instance is, i.e., if there exists a set of non-edges $F \subseteq \bar E$ with $|F|\le b$, such that $\min_{C\in \CCC}\sigma_C(S, F)\ge 1$ for all $S\in\MMM(F, k)$: 
    (``$\Rightarrow$'')~Assume that there is a set cover $\SSS$ of size at most $\kappa$. Let $F = \{(u, q)\}$ for some arbitrary node $u \in B$. Then $S=\{v_j: D_j\in \SSS\}$ achieves $\sigma(S, F)= \nu + \kappa + 1$. Note that nodes in $A$ have no ingoing edges with positive probability and thus no set that is not a subset of $A$ can achieve a higher coverage than $S$ thus $\MMM(F, k)=\{S\}$. As a consequence $\min_{C\in \CCC} \sigma_C(S_{F, k}, F) \ge 1$ for all $S\in \MMM(F,k)$.
    (``$\Leftarrow$'')~Now assume that there exists a set $F\subseteq \bar E$ with $|F|\le b = 1$, such that $\min_{C\in \CCC}\sigma_C(S, F)\ge 1$ for all $S\in \MMM(F,k)$. Note that $F \subseteq \bar E = E_{B,q}$ and thus from $\min_{C\in \CCC}\sigma_C(S, F)\ge 1$, it follows that $\sigma_C(S, \emptyset)\ge 1$ for every $C=\{u_i\}$ and $S\in\MMM(F, k)$.
    By the assumption on the \textsc{Set Cover} instance, the set $S$ can be transformed into a subset $S'$ of $A$ such that still $\sigma_C(S', \emptyset)\ge 1$ for every $C=\{u_i\}$. We can thus conclude that $\{D_i:v_i\in S'\}$ is a set cover of size at most $\kappa$.
\end{proof}

\section{The \texorpdfstring{\FIMALG}{FIMALG} Problem: Towards Fairness in Practice}\label{sec: fimalg}
\paragraph{Problem Definition.}
We have seen a lot of evidence above that \FIMAL is intractable. We thus continue by proposing an alternative problem that not only turns out to be more computationally tractable, but also is possibly practically better motivated in the first place in the following sense:
The problem of finding a set of at most $k$ nodes that maximizes $\sigma(\cdot, F)$ is however an \NP-hard optimization problem and thus it is unrealistic to assume the entity to spread information using a maximizing set. Instead what is frequently used in practice for the computation of an efficient seed set is the greedy algorithm. In fact, the choice of the greedy algorithm is also well-founded in theory, as, for a fixed set of non-edges $F$, the set function $\sigma(\cdot, F)$ is monotone and submodular and thus one is guaranteed to achieve an essentially optimal approximation factor of $1-1/e-\eps$ for any $\eps>0$, see the work of Kempe, Kleinberg, and Tardos~\cite{KempeKT15}. Hence, an optimization problem that is practically better motivated than~\FIMAL, assumes that the efficiency-oriented entity, in order to spread information, uses the greedy algorithm for computing the seed set. The greedy algorithm for $\sigma(\cdot, F)$ is however a randomized algorithm, as it relies on simulating information spread using a polynomial number of live-edge graphs (or reverse reachable (RR) sets, depending on the implementation). It becomes thus necessary that we consider the output of the algorithm to be a distribution over seed sets of size $k$, rather than just a single set. For a set of non-edges $F \subseteq \bar E$ and an integer $k$, let us denote this distribution with $p(F,k)$. We then define the \FIMALG problem as:
\begin{align*}
    \max_{F \subseteq \bar E: |F| \le b } \big\{ \tau : \E_{S\sim p(F,k)}[\sigma_C(S, F)] \ge \tau \;\forall\, C\in\CCC\big\}. 
\end{align*}
Intuitively, our goal in the optimization problem \FIMALG is to find a set of at most $b$ non-edges $F \subseteq \bar E$, that, when added to $G$, maximizes the minimum community coverage (in expectation) when information is spread using the greedy algorithm -- a quite realistic assumption. We assume the greedy algorithm to break ties arbitrarily, but consistently.

Here, we do not assume to have access to $p(F,k)$, not even for one set $F$, as it would generally require exponential space to be encoded. Instead, we assume to have access to the greedy algorithm in an oracle fashion, i.e., for a given set $F$, we can call the greedy algorithm on $\sigma(\cdot, F)$ with budget $k$ and get a set $S$. One can then show using an easy Hoeffding bound argument, see below, that $\E_{S\sim p(F,k)}[\sigma_C(S, F)]$ can be approximated arbitrarily well w.h.p.\ for every $F$. 

It is also worth mentioning that our approach can be extended to a setting where we want to be fair w.r.t.\ multiple implementations of the greedy algorithm or even more generally to multiple implementations of multiple algorithms (different from the greedy algorithm). This can be achieved as follows. Assume that $(p_i)_{i\in[N]}$ are a priori-likelihoods of using one of $N$ different algorithms and assume $p^i(F, k)$ to reflect the probability distribution of seed sets corresponding to algorithm $i$. Then the distribution with $p_S(F, k) := \sum_{i} p_i\cdot p^i_S(F, k)$ for $S\subseteq V$ reflects the distribution over seed sets resulting from using all $N$ algorithms. The only condition here, for our algorithmic results below to keep working, is that the algorithms are polynomial time.

\paragraph{Polynomiality of Deterministic Case with Constant $b$.}
We now first observe that in the deterministic case with constant $b$, it is simple to solve the problem exactly in polynomial time, simply by going through all at most $\binom{n^2 - m}{b} \le n^{2b}$ possible sets of non-edges $F$, computing the deterministic set $S_F$ that the greedy algorithm outputs for maximizing $\sigma(\cdot, F)$, and checking what is the value $\tau_F = \min_{C\in \CCC} \sigma_C(S_F, F)$. Then return the set $F$ that achieves the maximum $\tau_F$. Although this seems trivial, we notice that such an approach cannot work for \FIMAL, for which we showed that the problem remains \NP-hard in the deterministic case even if $b=1$, see~\cref{thm: FIMAE inapx b1}.
\begin{observation}\label{obs: FIMALG det}
    There is a polynomial time algorithm to compute an optimal solution to \FIMALG in the deterministic case when $b$ is constant.
\end{observation}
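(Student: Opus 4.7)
The plan is the brute-force enumeration already hinted at in the paragraph preceding the statement. Since $b$ is a constant, the number of candidate sets $F \subseteq \bar E$ with $|F| \le b$ is at most $\sum_{i=0}^{b}\binom{n^2}{i} = O(n^{2b})$, which is polynomial in $n$. I would iterate over all such $F$, and for each one compute the seed set $S_F$ that the greedy algorithm produces when run on $\sigma(\cdot, F)$ in the graph $G' = (V, E \cup F)$, and then evaluate $\tau_F := \min_{C \in \CCC} \sigma_C(S_F, F)$. Finally, return a set $F$ achieving the maximum value $\tau_F$.

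The key observation making this work in the deterministic case is that both $\sigma(\cdot, F)$ and $\sigma_C(\cdot, F)$ can be computed exactly in polynomial time: since all edge weights lie in $\{0,1\}$, the only live-edge graph arising with positive probability is $G'$ itself, so $\sigma(S, F) = |\rho_{G'}(S)|$ and $\sigma_C(S, F) = |\rho_{G'}(S) \cap C|/|C|$, both obtainable by a BFS/DFS from $S$ in $G'$. Consequently the greedy algorithm on $\sigma(\cdot, F)$ is itself deterministic: in each of its $k$ rounds it picks the single node maximizing the exact marginal gain, with ties broken by the fixed rule the paper assumes throughout. Thus the distribution $p(F, k)$ collapses to a point mass on $S_F$, and the \FIMALG objective for $F$ equals precisely $\tau_F$.

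Putting these observations together, the algorithm runs in time $O(n^{2b}\cdot T_{\mathrm{greedy}}(n))$, where $T_{\mathrm{greedy}}(n)$ is polynomial (the greedy algorithm performs $O(kn)$ marginal-gain evaluations, each a polynomial-time reachability computation). Hence the overall running time is polynomial in $n$ for any constant $b$, and the algorithm returns an optimal solution by construction. There is no real obstacle here; the statement is just the confirmation that all three ingredients --- polynomial size of the search space, exactness of the coverage computations, and determinism of greedy in the $\{0,1\}$-weighted setting --- line up in the constant-$b$ deterministic regime. The contrast with \FIMAL (\cref{thm: FIMAE inapx b1}), where the inner optimization $\MMM(F, k)$ forces us to reason about the entire set of spread-maximizing seed sets rather than a single deterministic outcome, is precisely what breaks the analogous brute-force approach there.
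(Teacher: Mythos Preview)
Your proposal is correct and matches the paper's approach exactly: enumerate all $O(n^{2b})$ candidate edge sets $F$, run the (now deterministic) greedy algorithm on each to obtain $S_F$, evaluate $\tau_F=\min_{C\in\CCC}\sigma_C(S_F,F)$, and return the maximizer. Your additional justification that $p(F,k)$ collapses to a point mass in the deterministic setting and that all coverage quantities reduce to exact reachability computations is precisely the reasoning the paper leaves implicit.
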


\paragraph{Hardness.}
In the language of parameterized complexity, Observation~\ref{obs: FIMALG det} shows that the deterministic \FIMALG problem belongs to the class $\XP$ when parameterized by $b$. A natural question is therefore whether there exists an $\FPT$ algorithm that solves or approximates \FIMALG in deterministic instances. In fact, already \cref{thm: FIMAE inapx k1} answers negatively to this question as the proof shows a polynomial-time reduction from the \textsc{Set Cover} problem to the deterministic case of \FIMALG in which $b$ is equal to the size of a set cover $\kappa$. As \textsc{Set Cover} is $\operatorname{W}[2]$-hard w.r.t.\ $\kappa$, \FIMALG does not admit an $\FPT$ algorithm w.r.t.\ $b$, even in the deterministic case, unless $\operatorname{W}[2] =\FPT$. Moreover, under the same condition, no parameterized $\alpha$-approximation algorithm exists since the optimum of a \FIMALG instance is strictly positive if and only if there exists a set cover of size $\kappa$.

A natural next question is what happens for general $b$, but with $k=1$. The problem remains hard in this case. Consider the instance constructed in the reduction in~\cref{thm: FIMAE inapx k1}. As $k=1$ and as the instance is deterministic, it is clear that the greedy algorithm, for any set $F\subseteq \bar E$ of non-edges, simply computes a maximizing set of cardinality 1. Hence the following statement can be shown in the same way as in the proof of~\cref{thm: FIMAE inapx k1}: there exists a set cover $\SSS$ of size at most $\kappa$ if and only if there exists a set of non-edges $F \subseteq \bar E$ with $|F|\le b$, such that $\min_{C\in \CCC}\E_{S\sim p(F,k)}\sigma_C(S, F)]\ge 1$. This yields the following corollary to~\cref{thm: FIMAE inapx k1}.
\begin{corollary} \label{cor: FIMAEg inapx}
	For any $\alpha \in (0, 1]$, it is \NP-hard to approximate the \FIMALG problem to within a factor of $\alpha$, even in the deterministic case and if $k=1$.
\end{corollary}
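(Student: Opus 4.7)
The plan is to reuse the Set Cover reduction from the proof of Theorem~\ref{thm: FIMAE inapx k1} essentially verbatim, exploiting the fact that when $k=1$ and the instance is deterministic, the greedy algorithm coincides with an exact spread maximizer. Specifically, with budget $1$ the greedy algorithm outputs a single node $v^\star \in \argmax_{v \in V} \sigma(\{v\}, F)$ (according to its fixed tie-breaking rule); hence $v^\star \in \MMM(F, 1)$, the distribution $p(F, 1)$ is a point mass on $\{v^\star\}$, and $\E_{S \sim p(F, 1)}[\sigma_C(S, F)] = \sigma_C(\{v^\star\}, F)$ for every community $C \in \CCC$.

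First, I would construct the same graph $G$, community structure $\CCC$, and budget $b = \kappa$ from a Set Cover instance as in the proof of Theorem~\ref{thm: FIMAE inapx k1}, with $k = 1$. Next, I would re-derive both directions of the equivalence in the greedy formulation. In the forward direction, given a cover $\SSS$ of size $\kappa$, adding $F = \{(q, v_j) : D_j \in \SSS\}$ makes $\{q\}$ the unique singleton maximizer (since $q$ reaches $1 + \kappa + \nu$ nodes while every other singleton reaches at most $1 + \nu$), so greedy necessarily outputs $\{q\}$ and every community coverage equals $1$. In the backward direction, if some $F$ with $|F| \le \kappa$ yields expected minimum community coverage at least $1$, then because $\bar E = E_{q, A}$, the greedy must select $\{q\}$ and the chosen non-edges must induce a size-$\kappa$ set cover.

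Finally, I would convert the yes/no equivalence into inapproximability: on these instances the optimum of \FIMALG is either $1$ (yes) or $0$ (no), so for any $\alpha \in (0, 1]$ an $\alpha$-approximation returns a strictly positive value precisely on yes-instances. Verifying which case we are in is easy, since greedy on a deterministic instance with $k = 1$ runs in polynomial time and we can then check the community coverages directly.

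The main obstacle I anticipate is only a potential tie-breaking subtlety: the \FIMALG model allows greedy to break ties arbitrarily but consistently, and we need to ensure this does not disrupt the forward direction. However, by the counts above, $\{q\}$ is a \emph{strict} maximizer whenever $\kappa \ge 1$ (which we may assume w.l.o.g., since a Set Cover instance with $\kappa = 0$ is trivial), so ties never arise and the reduction goes through.
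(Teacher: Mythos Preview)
Your proposal is correct and takes essentially the same approach as the paper: both reuse the Set Cover reduction from Theorem~\ref{thm: FIMAE inapx k1} and observe that, in the deterministic case with $k=1$, greedy computes a maximizing singleton, so the yes/no equivalence and the resulting inapproximability carry over to \FIMALG verbatim. Your explicit treatment of tie-breaking (showing $\{q\}$ is a strict maximizer whenever $\kappa\ge 1$) is a detail the paper leaves implicit.
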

As mentioned above, we will see below that \FIMALG for general constant $b$ turns out to be arbitrarily well approximable. To prove this, we first turn back to the question of approximating $\E_{S\sim p(F,k)}[\sigma_C(S, F)]$ for a fixed $F$.

\paragraph{Approximating $p(F, k)$.}
As mentioned above, we do not assume access to $p(F, k)$, instead we show that, using the greedy algorithm in an oracle fashion, we can approximate $\E_{S\sim p(F,k)}[\sigma_C(S, F)]$ arbitrarily well using a Hoeffding bound. We first recall that already $\sigma_C$ cannot be evaluated exactly but has to be approximated using $\poly(n,\eps^{-1})$ many samples of live-edge graphs.
\begin{restatable}{lemma}{hoeffdinggreedy}\label{lem: hoeffding greedy}
    Given an instance $(G, \CCC, b, k)$ of \FIMALG with constant $b$, one can in $\poly(n,m,\eps^{-1})$ time, compute functions $f_C$ such that, w.h.p., 
    \(
        | f_C(F) - \E_{S\sim p(F, k)} [\sigma_C(S, F)] | 
        \le \eps
    \)
    for all $C\in \CCC$ and $F\subseteq \bar E$ with $|F|\le b$. Here $m=|C|$.
\end{restatable}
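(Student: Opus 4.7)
The plan is to estimate $\E_{S\sim p(F,k)}[\sigma_C(S,F)]$ by a standard two–level Monte Carlo scheme: an outer average that takes independent runs of the greedy oracle and, for each sampled seed set, an inner approximation of $\sigma_C(S,F)$ via live-edge graphs. Concretely, for a fixed $F$ with $|F|\le b$, I would run the greedy algorithm $N$ times on $\sigma(\cdot,F)$, obtaining independent samples $S_1,\dots,S_N\sim p(F,k)$. For each $S_i$, I would approximate $\sigma_C(S_i,F)$ within additive error $\eps/2$ (w.h.p.) using Lemma~4.1 of~\cite{BeckerDGG22}, which requires $\poly(n,\eps^{-1})$ live-edge graph samples. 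Let $\hat\sigma_C(S_i,F)$ denote that estimate, and define
\[
    f_C(F) \;:=\; \frac{1}{N}\sum_{i=1}^N \hat\sigma_C(S_i,F).
\]

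The analysis splits the error into two pieces. First, by a Hoeffding bound applied to the bounded random variables $\sigma_C(S_i,F)\in[0,1]$, for $N=\Theta(\eps^{-2}\log(1/\delta))$ outer samples the empirical mean $\frac{1}{N}\sum_i \sigma_C(S_i,F)$ deviates from $\E_{S\sim p(F,k)}[\sigma_C(S,F)]$ by at most $\eps/2$ with probability at least $1-\delta$. Second, each inner estimate satisfies $|\hat\sigma_C(S_i,F)-\sigma_C(S_i,F)|\le \eps/2$ w.h.p., so averaging preserves this bound. A triangle inequality then yields $|f_C(F)-\E_{S\sim p(F,k)}[\sigma_C(S,F)]|\le \eps$.

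To obtain the bound simultaneously for all $C\in\CCC$ and all $F\subseteq\bar E$ with $|F|\le b$, I would take a union bound over the $m$ communities and over the at most $\binom{n^2}{b}\le n^{2b}$ relevant edge sets; since $b$ is constant, there are only polynomially many such pairs. By choosing the constant in the failure probability $\delta = n^{-\alpha}/(m\,n^{2b})$ (equivalently, slightly increasing $N$ and the number of inner live-edge samples by a logarithmic factor), the guarantee holds w.h.p.\ uniformly. The total running time is $n^{2b}\cdot m\cdot N\cdot (\text{greedy time}+\text{inner estimation time})$, which is $\poly(n,m,\eps^{-1})$ as required.

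The main (mild) obstacle is bookkeeping the two error sources and the uniform guarantee: one must be careful that the greedy oracle is randomized, so the $S_i$'s really are i.i.d.\ from $p(F,k)$ (which holds as long as each call uses fresh randomness), and that the inner estimator is independent of the outer sampling. Beyond this, the argument is a routine concentration plus union bound computation, which is why I refrain from grinding through the exact constants.
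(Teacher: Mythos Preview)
Your proposal is correct and essentially identical to the paper's proof: both use a two-level scheme with $N=\Omega(\eps^{-2}\log(nm))$ outer greedy samples, an inner additive-$\eps/2$ approximation of $\sigma_C$ via Lemma~4.1 of~\cite{BeckerDGG22}, a Hoeffding bound on the outer average, a triangle inequality to combine the two error sources, and a union bound over the $m$ communities and the $\le n^{2b}$ edge sets. The only cosmetic difference is that the paper fixes a single approximation $\tilde\sigma_C$ uniformly over all $(S,F)$ upfront rather than re-estimating per sample, but this does not change the argument.
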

\begin{proof}
	Following our considerations on approximation in the preliminaries, we assume to have access to approximations $\tilde \sigma_C$ of $\sigma_C$ for all $C\in \CCC$ such that, w.h.p., $|\sigma_C(S, F) - \tilde \sigma_C(S, F)| \le \eps/2$ for all $C\in \CCC$, $S\subseteq V$, and $F\subseteq \bar E$ with $|F|\le b$. Such approximations can, e.g., be computed as in Lemma 4.1 of the paper by Becker et al.~\cite{BeckerDGG22}. Concluding from the bound on $T$ there, this can be done in $\poly(n, m, \eps^{-1})$ time. We can now, for every  $F\subseteq \bar E$ with $|F|\le b$, call the greedy algorithm $N=\Omega(\eps^{-2} \log(n m))$ times and obtain sets $S_1,\ldots, S_N$ of size $k$. For every $C\in \CCC$, define $f_C(F):= \frac{1}{N}\sum_{i=1}^N \tilde \sigma_C(S_i, F)$ and $\bar f_C(F):= \frac{1}{N}\sum_{i=1}^N \sigma_C(S_i, F)$. Then using a Hoeffding bound, see, e.g., Theorem 4.12 in the book by Mitzenmacher and Upfal~\cite{mitzenmacher2017probability}, it holds that $\Pr[|\bar f_C(F) - \E_{S\sim p(F, k)} [\sigma_C(S, F)]| \ge \eps/2] \le (n m)^{-\Omega(1)}$. After applying a union bound, we obtain that w.h.p., we have  $|\bar f_C(F) - \E_{S\sim p(F, k)} [\sigma_C(S, F)]| \le \eps/2$ for all $C\in \CCC$ and $F\subseteq \bar E$ with $|F|\le b$. Hence, w.h.p., 
 	\begin{align*}
		\big| f_C(F) - \E_{S\sim p(F, k)} [&\sigma_C(S, F)] \big| \le \big|\bar f_C(F) - \E_{S\sim p(F, k)} [\sigma_C(S, F)] \big|\\ 
		& + \big| f_C(F) - \bar f_C(F) \big|
		\le \eps. \qedhere
	\end{align*}
\end{proof}

\paragraph{General Approximation for Constant $b$.}
The above lemma enables us to provide a polynomial time algorithm for \FIMALG when $b$ is constant that finds a set $F\subseteq \bar E$ that is $\eps$-close to optimal (in an additive sense) w.h.p. After proving the above lemma, the idea is simple and similar to the deterministic case: Again, go through all at most $n^{2b}$ possible sets of non-edges $F$, compute $\eps/2$-approximations $(f_C(F))_{C\in \CCC}$ as in~\cref{lem: hoeffding greedy}, and return the set with maximum value $\tau_F = \min_{C\in \CCC} f_C(F)$. This set is an additive $\eps$-approximation of the maximizing set $F^*$ (using the approximation guarantee once for $F$ and once for $F^*$).
\begin{lemma}\label{lem: FIMALG apx}
    Let $\eps\in (0,1)$, there is a polynomial time algorithm to compute an additive $\eps$-approximation to the optimal solution of \FIMALG when $b$ is constant.
\end{lemma}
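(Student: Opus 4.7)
The plan is to follow the brute-force template sketched just before the lemma, combined with the concentration result of~\cref{lem: hoeffding greedy}. Concretely, I would enumerate every subset $F\subseteq \bar E$ with $|F|\le b$; since $b$ is constant, there are at most $\sum_{i=0}^{b}\binom{|\bar E|}{i}\le (n^2)^b = n^{2b}$ such sets, which is polynomial in $n$. For each candidate $F$, I would invoke~\cref{lem: hoeffding greedy} with accuracy parameter $\eps/2$ to obtain, w.h.p., values $f_C(F)$ satisfying $|f_C(F)-\E_{S\sim p(F,k)}[\sigma_C(S,F)]|\le \eps/2$ for every community $C\in\CCC$. I would then compute $\tau_F := \min_{C\in\CCC} f_C(F)$ and return the candidate $\hat F$ maximizing $\tau_F$.

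The correctness argument is a two-sided sandwich. Let $F^*$ be an optimal solution to \FIMALG, so that $\opt := \min_{C\in\CCC}\E_{S\sim p(F^*,k)}[\sigma_C(S,F^*)]$ is the optimum value. Conditioning on the high-probability event guaranteed by~\cref{lem: hoeffding greedy} (applied simultaneously to all at most $n^{2b}$ candidate sets $F$, which only costs a polynomial factor in the sample count), we have on the one hand
\[
    \tau_{F^*} \;\ge\; \min_{C\in\CCC}\E_{S\sim p(F^*,k)}[\sigma_C(S,F^*)] - \eps/2 \;=\; \opt - \eps/2,
\]
and on the other hand, by the choice of $\hat F$,
\[
    \min_{C\in\CCC}\E_{S\sim p(\hat F,k)}[\sigma_C(S,\hat F)]
    \;\ge\; \tau_{\hat F} - \eps/2 \;\ge\; \tau_{F^*} - \eps/2
    \;\ge\; \opt - \eps.
\]
This delivers the claimed additive $\eps$-approximation with high probability.

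For the running time, the enumeration step produces $n^{2b}$ candidates, and for each of them~\cref{lem: hoeffding greedy} requires $\poly(n,m,\eps^{-1})$ time. Minimizing over communities and finally maximizing over the candidates adds only an $O(n^{2b}\cdot m)$ overhead. Since $b$ is constant, the total running time remains polynomial in $n$, $m$, and $\eps^{-1}$.

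I do not expect any serious obstacle: the enumeration is polynomial exactly because $b$ is constant, and all the statistical machinery -- Hoeffding over the greedy oracle, together with a union bound over the polynomially many candidate sets $F$ -- is already packaged in~\cref{lem: hoeffding greedy}. The only point that requires care is to take the union bound over both candidate sets $F$ and communities $C$ when applying~\cref{lem: hoeffding greedy}, which is easily absorbed by choosing the hidden polylogarithmic factor in the sample count $N$ inside that lemma slightly larger. Everything else is the routine two-sided error argument above.
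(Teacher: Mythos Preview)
Your proposal is correct and follows essentially the same approach as the paper: enumerate all at most $n^{2b}$ candidate sets $F$, invoke \cref{lem: hoeffding greedy} with accuracy $\eps/2$, return the $F$ maximizing $\min_C f_C(F)$, and conclude via the two-sided error argument. The paper merely sketches this in the paragraph preceding the lemma; your write-up is in fact more detailed, and note that \cref{lem: hoeffding greedy} as stated already provides the guarantee uniformly over all $F$ with $|F|\le b$, so no additional union bound over candidate sets is needed on your end.
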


\paragraph{Practical Algorithms.}
For the case with general budget $b$, recall that the problem is in-approximable unless $\PP=\NP$ according to~\cref{cor: FIMAEg inapx}. We still propose several algorithms in this paragraph that perform well in practice as we will show later on. All our algorithms are of a greedy flavour and based on restricting to the evaluation of increments of non-edges that seem promising to improve fairness. 
In the following, we describe the proposed methods.
\begin{description}
	\item[\algo{grdy\_al}.] The algorithm that, starting with $F=\emptyset$, in $b$ iterations, chooses the non-edge $e$ into $F$ that maximizes the increment $\min_C \E_{S\sim p(F, k)}[\sigma(S, F \cup \{e\})] - \min_C \E_{S\sim p(F, k)}[\sigma(S, F)]$. For efficiency we restrict to evaluate only non-edges that are (1) incident to $S_p$, the union over all sets with positive support in $p(F, k)$, and (2) are inter-community edges. Note that at the beginning of each iteration, we recompute $p(F,k)$ as $F$ changes. 
	\item[\algo{to\_minC\_infl}.] The algorithm that, starting from the empty set $F=\emptyset$, adds the non-edge $e=(u,v)\in \bar E\setminus F$ to $F$ that connects a node from $S_p$ with a node that maximizes $f(e):=\Pr_{S\sim p(F,k)}[u\in S]\cdot w_{e} \cdot \E_{S\sim p(F, k)}[\sigma_{\bar C}(S\cup \{v\}, F)]$, where $\bar C$ is the community of minimum coverage. We refer the reader to the pseudo-code in Algorithm~\ref{algorithm:to_minC_max_inflC}.
	The rationale being to choose the non-edge that connects a seed node with a node that has large influence in the community $\bar C$ taking into account both the probability that $u$ is a seed and the edge weight $w_{e}$.
	\item[\algo{to\_minC\_min}.] The algorithm that, starting from the empty set, adds a non-edge to the node $\bar v$ with minimum probability of being reached in the community that currently suffers the smallest community coverage. Among all these non-edges we choose the non-edge $(u, \bar v)$ that maximizes the product $\Pr_{S\sim q}[u\in S] \cdot w_{(u, \bar{v})}$. The pseudo-code is given in Algorithm~\ref{algorithm:to-minC-min-v}.
\end{description}
We highlight two techniques that we use speed up our implementations: (1) a pruning technique for \algo{grdy\_al}: Let $\delta$ denote the best increment of an edge that we have seen so far. Before evaluating the exact increment of a non-edge $e=(u, v)\in A\setminus F$, we compute an upper bound on the increment achievable by $e$ via evaluating the expected community coverages $\E_{S\sim p(F,k)}[\sigma_C(\{v\}, F)]$ that would be achieved by choosing $v$ as a seed. We refer the reader to the pseudo-code in Algorithm~\ref{algorithm:GreedyEdgesSC} for further details. (2) A way to update RR sets rather than recompute them from scratch after adding edges: In all our algorithms, we change the graph by adding edges to it. As a consequence the functions $\sigma$ and $\sigma_C$ need to be approximated based on different simulations or, here, based on different RR sets. We observe however that after adding one edge, say $e=(u,v)$ to the graph, we do not need to entirely resample the RR sets, but, instead, can update and reuse them as follows. For every RR set $R$ that contains the node $v$, we update $R$ by re-starting the RR set construction from $u$ with probability $w_{(u,v)}$ and adding the resulting nodes to $R$.

\begin{algorithm}[ht]
	\caption{\algo{grdy\_al}}
	\begin{algorithmic} 
		\REQUIRE instance \(\III=(G, \CCC, b, k)\)
		\ENSURE set $F\subseteq \bar E$ with $|F|\le b$
		
		\STATE \(F\leftarrow\emptyset\)\;
		\WHILE{$|F| < b$}
		\STATE \(q \leftarrow p(F, k)\)\;
		\STATE \(\delta \leftarrow -\infty \)\;
		\STATE \(A \leftarrow \{(u,v)\in \bar E: u\in S \text{ for some }S: q_S>0 \text{ and }v\notin S \text{ for all }S:q_S>0\}\)
		\FOR{\((u, v)=e\in A \setminus F\)}
		\STATE \( \tau_C(v) \leftarrow \E_{S\sim q}[\sigma_C(S, F)] + \E_{S\sim q}[\sigma_C(\{v\}, F)], \ \text{for all}\ C \in \CCC  \)\;
		\IF{\(\min_{C\in \CCC} \{\tau_C(v)\} > \delta \)}
		\STATE \(\lambda \leftarrow \min_{C\in \CCC} \{\E_{S\sim q}[\sigma_C(S, F \cup \{e\})]\}\)\;			
		\IF{\( \lambda > \delta   \)}
		\STATE \(\delta \leftarrow \lambda\)\;
		\STATE \(\bar{e} \leftarrow e\)\;
		\ENDIF
		\ENDIF
		\ENDFOR
		\STATE $F \leftarrow F \cup \{\bar{e}\}$ 
		\ENDWHILE
		\RETURN $F$
	\end{algorithmic}
	\label{algorithm:GreedyEdgesSC}
\end{algorithm}

\begin{algorithm}[ht]
	\caption{\algo{to\_minC\_infl}}
	\begin{algorithmic} 
		\REQUIRE instance \(\III=(G, \CCC, b, k)\)
		\ENSURE set $F\subseteq \bar{E}$ with $|F|\le b$
		
		\STATE \(F\leftarrow\emptyset\)\;
		\WHILE{$|F|<b$}
		\STATE \(q \leftarrow p(F, k)\)\;
		\STATE \(\bar{C}\leftarrow \argmin_{C\in\CCC}\{\E_{S\sim q}[\sigma_C(S, F)]\}\)\;
		\STATE \(\bar{e}\leftarrow \argmax_{(u, v)\in \bar{E}\setminus F}\{ \Pr_{S\sim q}[u\in S] \cdot w_{(u, v)}\cdot\E_{S\sim q}[\sigma_{\bar{C}}(S\cup\{v\}, F)]\} \)\;
		\STATE $F \leftarrow F \cup \{\bar{e}\}$ 
		\ENDWHILE
		\RETURN $F$
	\end{algorithmic}
		\label{algorithm:to_minC_max_inflC}
\end{algorithm}

\begin{algorithm}[ht]
	\caption{\algo{to\_minC\_min}}
	\begin{algorithmic} 
		\REQUIRE instance \(\III=(G, \CCC, b, k)\)
		\ENSURE set $F\subseteq \bar{E}$ with $|F|\le b$
		
		\STATE \(F\leftarrow\emptyset\)\;
		\WHILE{$|F| < b$}
		\STATE \(q \leftarrow p(F, k)\)\;
		\STATE \( \bar{C}\leftarrow \argmin_{C\in \CCC}\{\E_{S\sim q}[\sigma_C(S, F)]\}\)\;
		\STATE \( \bar{v}\leftarrow \argmin_{v\in \bar{C}}\{\E_{S\sim q}[\sigma_v(S, F)]\} \)\;
		\STATE \( \bar{e} \leftarrow \argmax_{(u, \bar{v})\in \bar{E}\setminus F} \{\Pr_{S\sim q}[u\in S] \cdot w_{(u, \bar{v})}\} \)\;
		\STATE $F \leftarrow F \cup \{\bar{e}\}$ 
		\ENDWHILE
		\RETURN $F$
	\end{algorithmic}
		\label{algorithm:to-minC-min-v}
\end{algorithm}

\section{Experiments}\label{sec: experiments}
In this section, we report on two experiments involving the \FIMALG problem. In the first experiment, we compare the algorithms presented above in terms of quality and running time. In a second experiment, we evaluate the best performing algorithm against other fairness-tailored seeding algorithms. We show, for several settings, that already adding just a few edges can lead to a situation where purely efficiency-oriented information spreading becomes automatically fair.\footnote{The code can be downloaded from 
\url{https://github.com/sajjad-ghobadi/fair_adding_links.git}} We proceed by describing the experimental setup.

\paragraph{Experimental Setting.}
In our experiments we use random, synthetic and real world instances.
(1)~Random instances are generated using the Barabasi-Albert model connecting newly added nodes to two existing nodes.  
(2)~The synthetic instances are the ones 
used by Tsang et al.~\cite{TsangWRTZ19}. Each network consists of 500 nodes and every node is associated with some attributes (region, ethnicity, age, gender and status) that induce communities. Nodes with similar attributes are more likely to share an edge.
(3)~We use similar real world instances as Fish et al.~\cite{Fish19}. We proceed by describing the real world instances.
\texttt{Arenas}~\cite{guimera2003self} and \texttt{email-Eu-core}~\cite{LeskovecKF07} are email communication networks at the
University Rovira i Virgili (Spain) and a
large European research institution, respectively. Each user corresponds to a node and there is a directed
edge between two users if at least one message is sent between them. In \texttt{email-Eu-core}, each user belongs to one
of 42 departments that defines a community structure.
\texttt{irvine}~\cite{OpsahlP09} is a network created from an online social network
at the University of California, irvine. 
Each node corresponds to a student and the network contains a directed edge if at least one online message was sent among the students.
\texttt{youtube}~\cite{YangL15} consists of a part of the social network in Youtube. There is a node for each user and each edge represents the friendship between two users.
In Youtube, the community structure is defined by Youtube groups, where each user can define a group and
others can join. For \texttt{youtube}, we considered a connected sub-network of size 3000 using the community structure.
After removing the nodes that do not belong to any community, we consider a sub-network consisting of the first 3000 nodes reached by a BFS from a random source
node. We also remove singleton communities, thus some of the nodes
may not belong to any community. The number of communities is 1575.
\texttt{ca-GrQc} (General Relativity and Quantum Cosmology) and \texttt{ca-HepTh} (High Energy
Physics - Theory)~\cite{LeskovecKF07} are co-authorship networks for
two different categories of arXiv. Each node corresponds to an author and an undirected edge between two nodes represents that they authored a paper together.
To avoid zero probabilities in the experiments, for all the real world instances (other than \texttt{youtube}), we considered the largest weakly connected component.
The properties of all instances are summarized in Table~\ref{instanses}. 

\begin{table}[htp]\small
	\begin{center}
		\begin{tabular}{c|c|c|c} 
			Dataset & $\#$Nodes & $\#$Edges & Direction  \\
			\hline
			\texttt{Barabasi-Albert} & $200$ & $792$ & Directed\\
			\texttt{Synthetic} & $500$ & $1576$-$1697$ & Directed\\
			\texttt{email-Eu-core} & $1005$ & $25571$ & Directed\\
			\texttt{Arenas} & $1133$ & $5451$ & Directed\\
			\texttt{irvine} & $1899$ & $20296$ & Directed\\
			\texttt{youtube} & $3000$ & $29077$ & Undirected\\
			\texttt{ca-GrQc} & $5242$ & $14496$ & Undirected\\
			\texttt{ca-HepTh} & $9877$ & $25998$ & Undirected\\
		\end{tabular}
		\caption{Properties of random, synthetic and real world networks (sorted by $n$).}
		\label{instanses}
	\end{center}
\end{table}

\begin{figure}[ht]
	\centering
	\includegraphics[trim={0 3.6cm 0 0}, clip, width=0.95\linewidth]{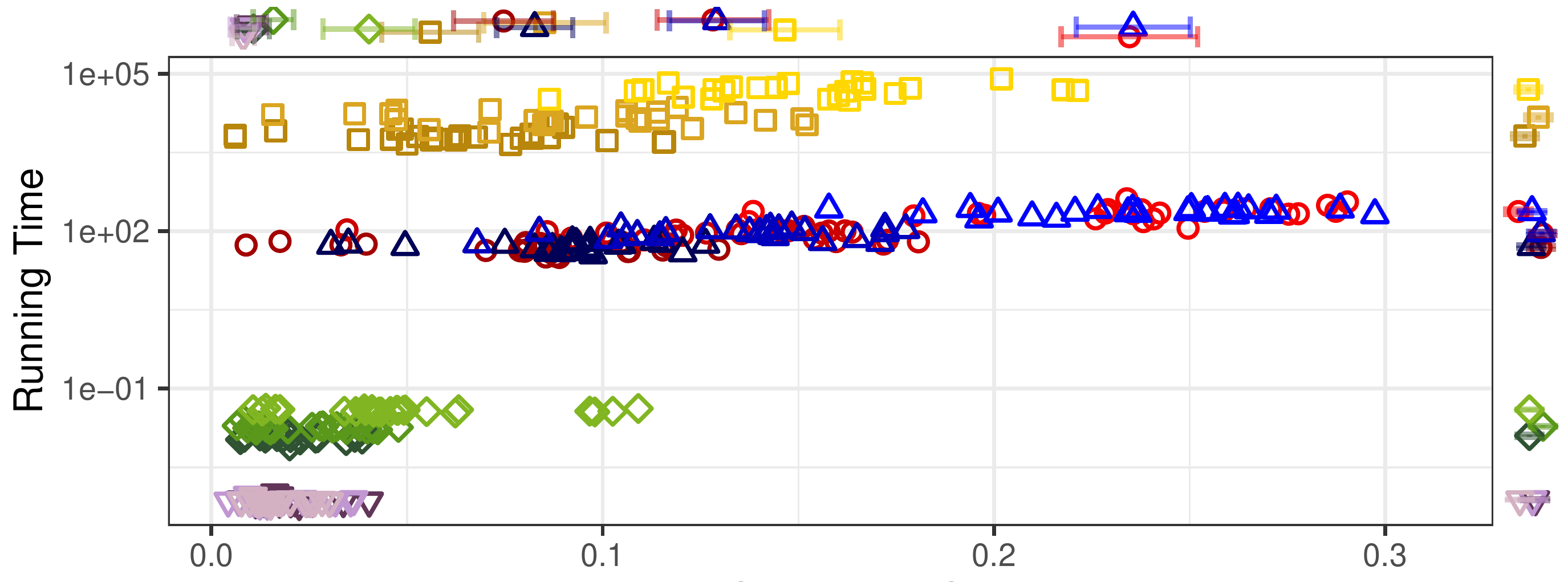} 
	\includegraphics[trim={0.2cm 0.5cm 0.2cm 0.0cm}, clip, width=0.95\linewidth]{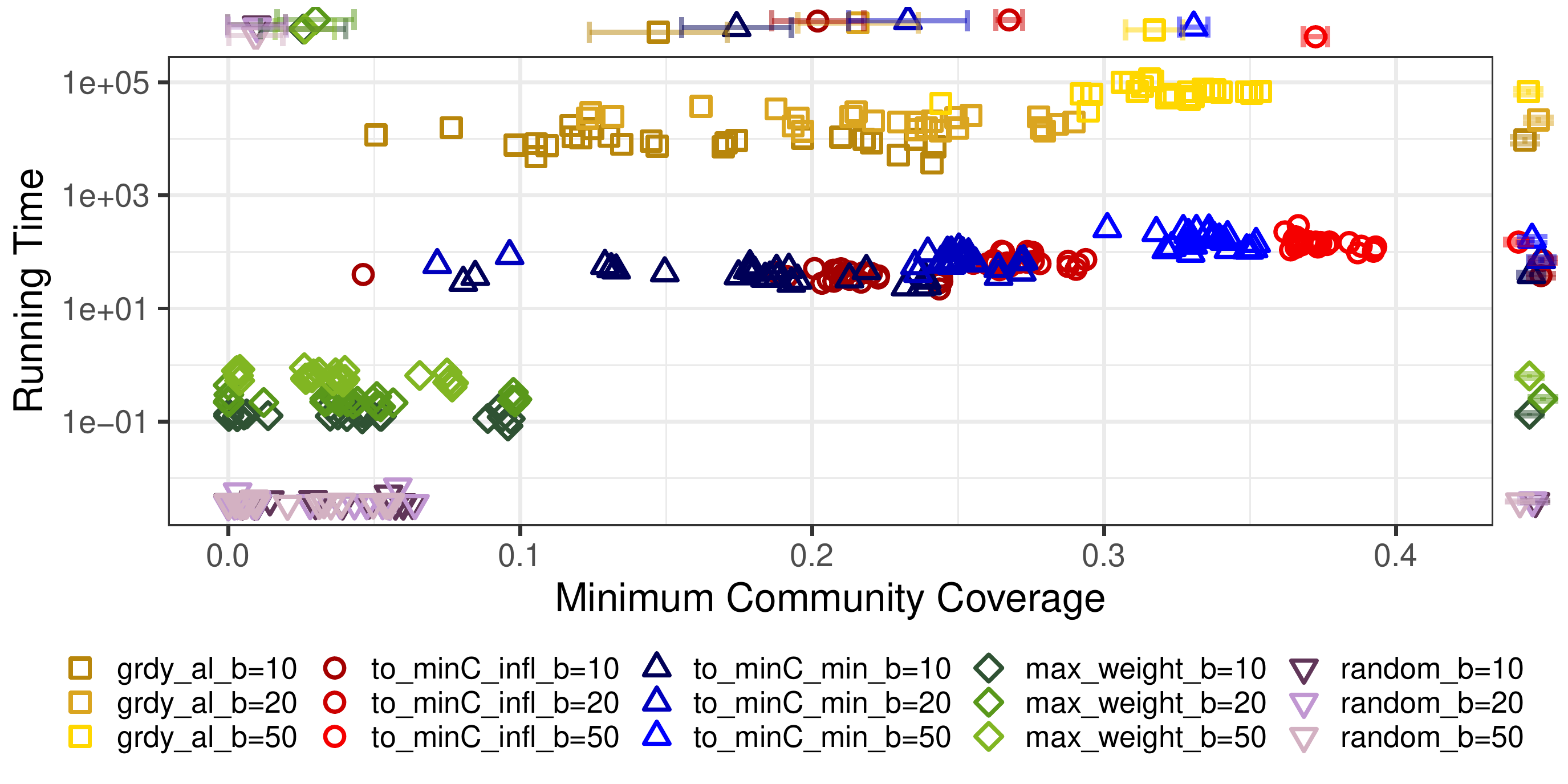} 
	\caption[Results Experiment 1: (1)~Random instances, (2)~synthetic instances]{Results Experiment 1: (1)~Random instances ($k=25$, $n=200$, singleton communities), (2)~synthetic instances ($k=25$, $n=500$, communities induced by gender and region). 
	The running time is on the logarithmic vertical axis, while the minimum community coverage is on the horizontal axis.
	}
	\label{fig: experiment 1 1}
\end{figure}

For random and synthetic instances we select edge weights uniformly at random in the interval $[0, 0.4]$, and in the interval $[0, 0.2]$ for the real world instances (other than \texttt{youtube}). For \texttt{youtube}, we choose the edge weights uniformly at random in the interval $[0, 0.1]$. We choose the non-edge weights uniformly at random from the interval $[0, 1]$.
We consider different community structures: (1)~Singleton communities: each node has its own community. (2)~BFS communities: for every $i \in [m]$, we generate a community $C_i$ of size $n/m$ using a breadth first search from a random source node (we continue this process if the size of a community is less than $n/m$). (3)~Community structures given for the synthetic networks and some of the real world networks.

We repeat each algorithm 5 times per graph. For random and synthetic instances, we average in addition over 5 graphs, thus resulting in 25 runs per algorithm. The error-bars in our plots represent 95-\% confidence intervals.
All experiments were executed
on a compute server running Ubuntu 16.04.5 LTS with 24
Intel(R) Xeon(R) CPU E5-2643 3.40GHz cores and a total
of 128 GB RAM.

We use the TIM implementation for IM by Tang, Xiao, and Shi~\cite{TangXS14} in order to implement the greedy algorithm for IM.
We note that our algorithms, \algo{grdy\_im}, and \algo{mult\_weight} are implemented in \texttt{C++} (and were compiled with g++ 7.5.0),
while \algo{moso}, \algo{grdy\_maxmin} and \algo{myopic} are implemented in \texttt{python} (we use
\texttt{python} 3.7.6 for executing the code). 
For the final evaluation of the algorithms implemented in \texttt{python}, we use a constant number of
$100$ live-edge graphs for simulating the information spread (this is a common approach in the literature \cite{Fish19, FarnadiBG20,BeckerDGG22}), while for the \texttt{C++} implementations we use the number of RR sets generated by the TIM implementation. For the final evaluation of ex-ante values, we set $\epsilon = \delta = 0.1$ to obtain an additive $\eps$-approximation with probability at least $1-\delta$. 

\begin{figure}[htp]
	\centering
	\includegraphics[trim={0 3.6cm 0 0}, clip, width=0.95\linewidth]{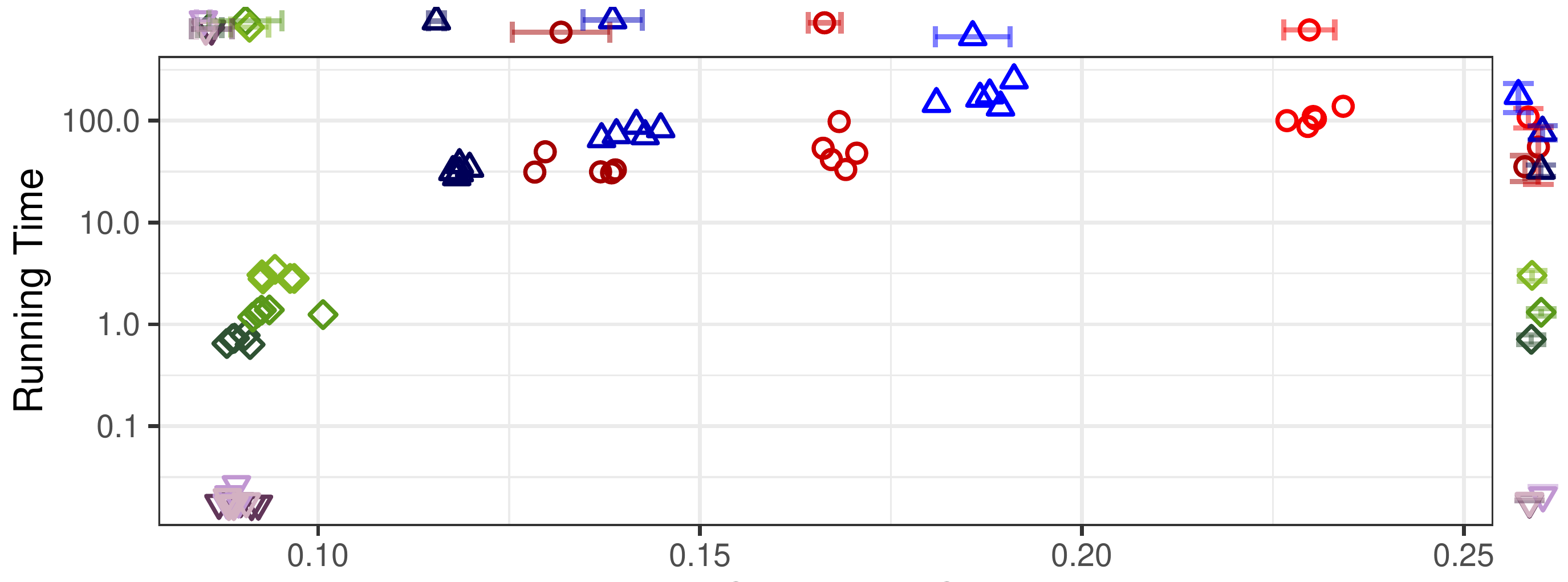} 
	\includegraphics[trim={0 3.6cm 0 0}, clip, width=0.95\linewidth]{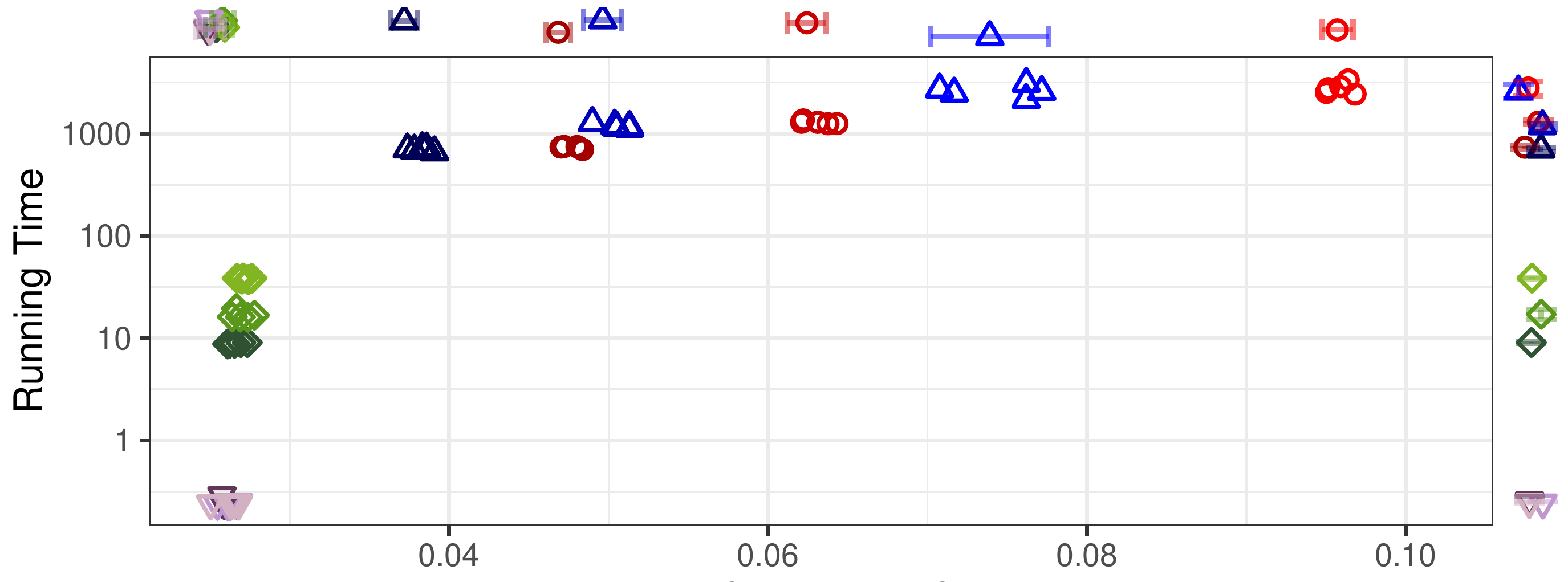} 
	\includegraphics[width=0.95\linewidth]{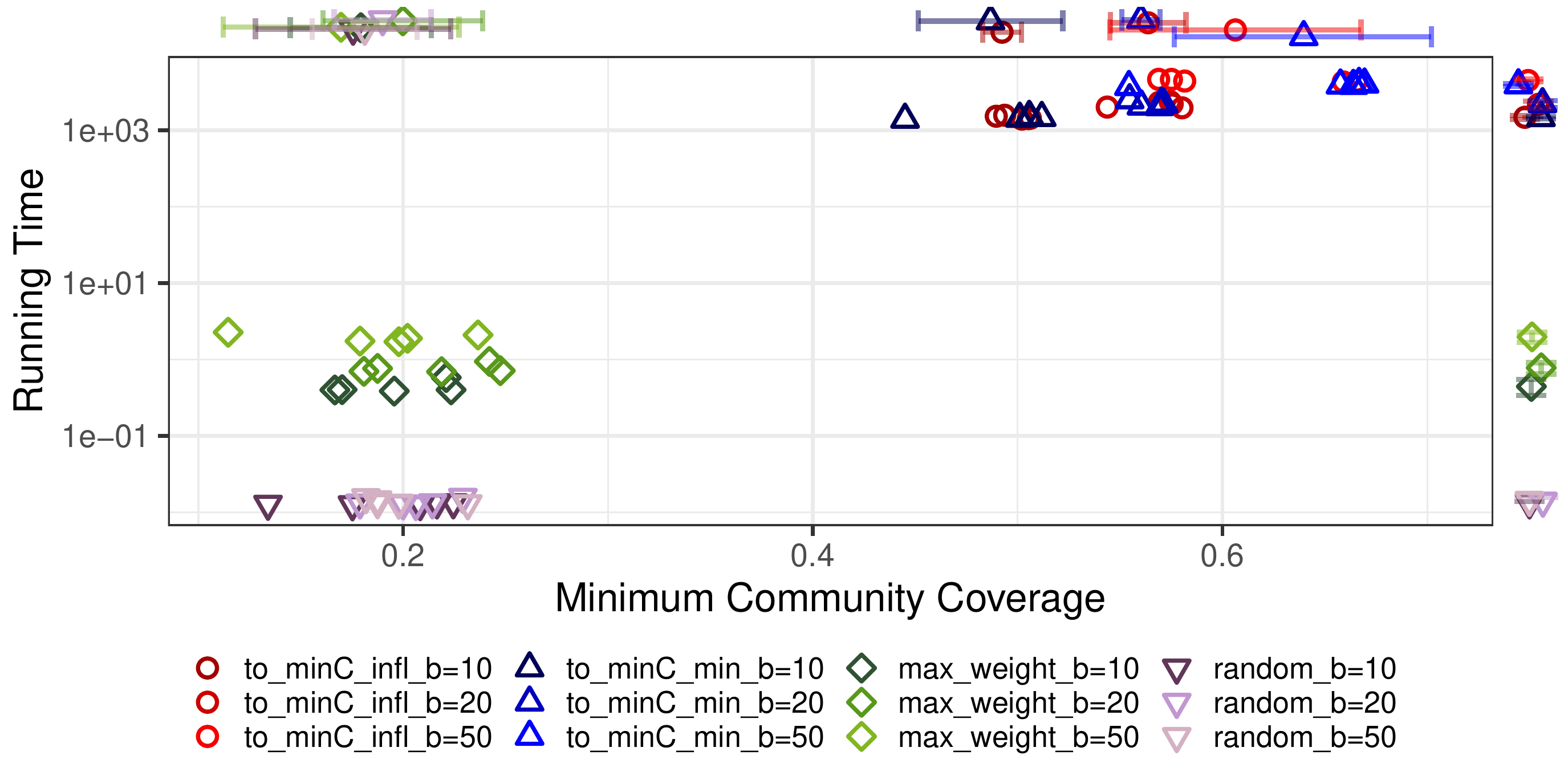} 
	\caption[Results Experiment 1:
		(1)~\textsf{Arenas} with BFS communities,
		(2)~\textsf{ca-GrQc} with BFS communities,
		(3)~\textsf{email-Eu-core} with real communities]{Results Experiment 1:
		(1)~\textsf{Arenas} with BFS communities ($m=10$), $k=20$,
		(2)~\textsf{ca-GrQc} with BFS communities ($m=10$), $k=20$,
		(3)~\textsf{email-Eu-core} with real communities, $k=20$. 
		Again, the running time is on the logarithmic vertical axis, while the minimum community coverage is on the horizontal axis.
	}
	\label{fig: experiment 1 2}
\end{figure}

\paragraph{Experiment 1.}
In addition to the three algorithms described in Section~\ref{sec: fimalg}, we evaluate the following two base lines: \algo{random}: the algorithm that chooses $b$ non-edges uniformly at random, and \algo{max\_weight}: the algorithm that chooses the $b$ non-edges of maximal weight. The results can be found in Figure~\ref{fig: experiment 1 1} for the random and synthetic instances.
We observe that, despite the pruning approach described above, \algo{grdy\_al}'s running time is the worst. Furthermore, the fairness that it achieves is worse than the one of \algo{to\_minC\_infl}. We thus exclude \algo{grdy\_al} from further experiments. \algo{random} and \algo{max\_weight} are fastest but the fairness achieved by
them is very poor. 

In Figure~\ref{fig: experiment 1 2}, we can see the results for the real world instances \textsf{Arenas}, \textsf{ca-GrQc} and \textsf{email-Eu-core}.
We observe that the running times of both algorithms \algo{to\_minC\_infl} and \algo{to\_minC\_min} are comparable, while \algo{to\_minC\_infl} achieves better values of fairness. We thus choose \algo{to\_minC\_infl} as the best performing algorithm as a result of this experiment.

\begin{figure}[ht]
	\centering
	\includegraphics[trim={0 3.4cm 0 0}, clip, width=0.95\linewidth]{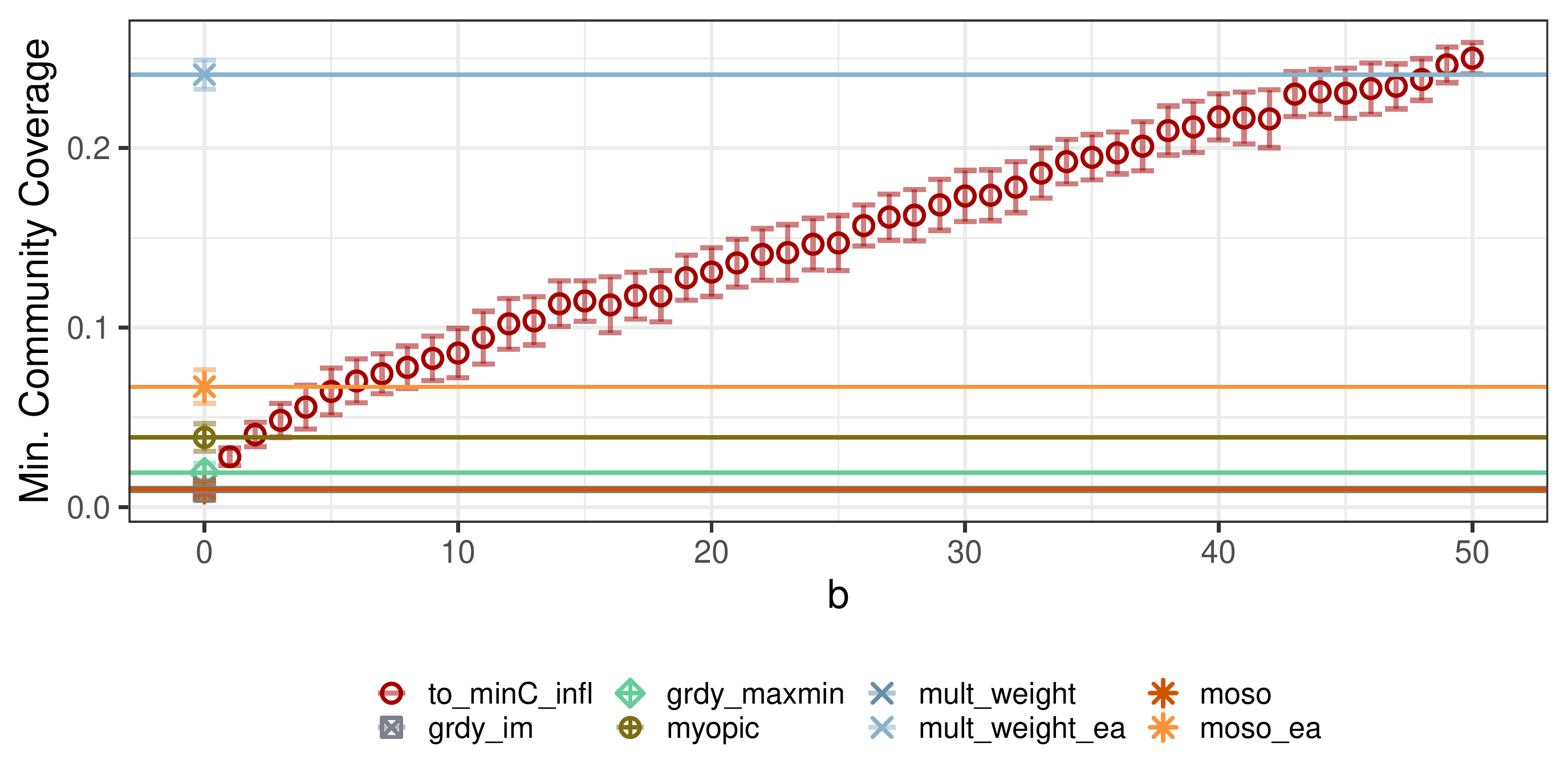} 
	\includegraphics[ width=0.95\linewidth]{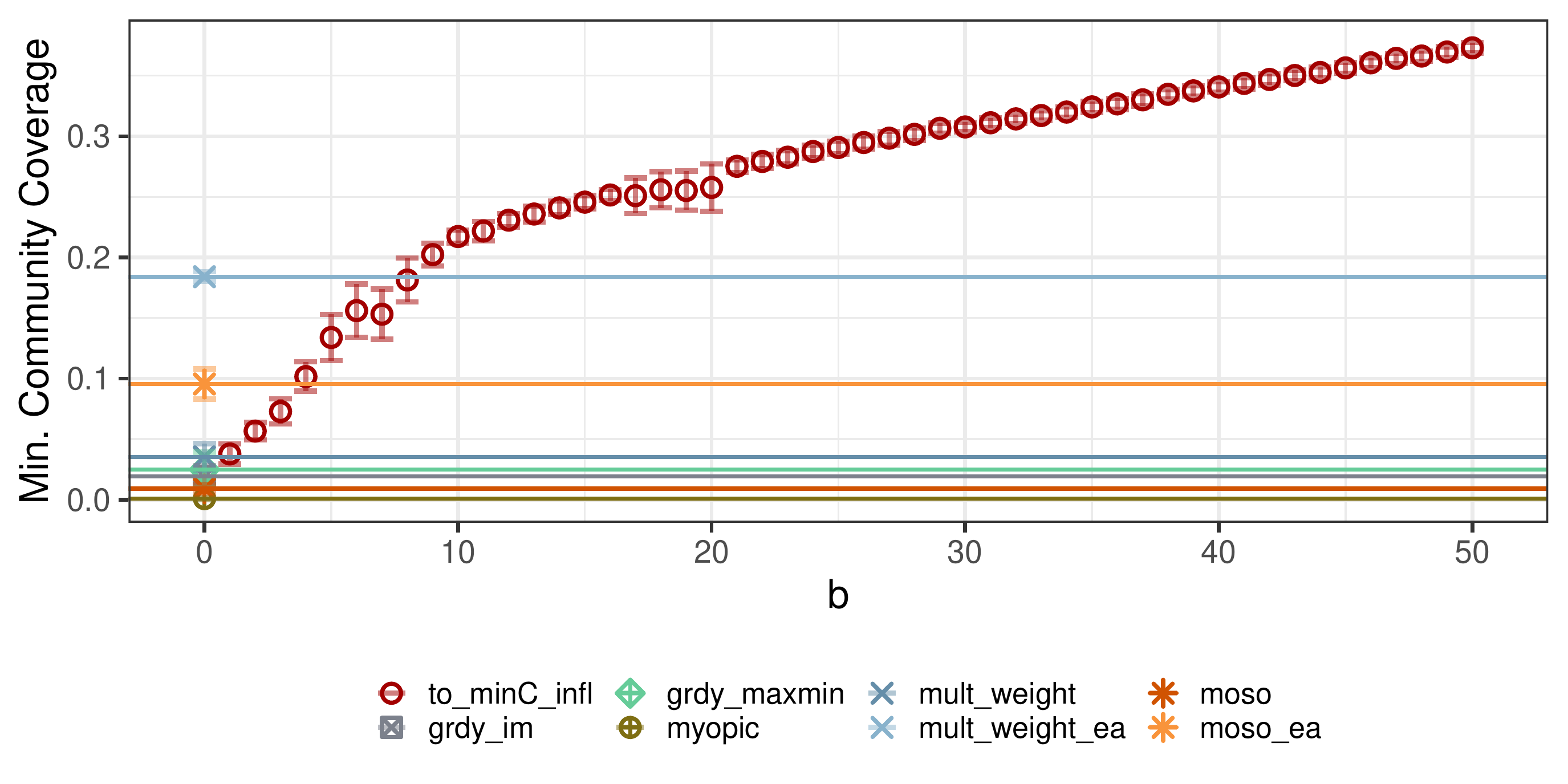} 
	\caption[Results Experiment 2: (1) Random instances, 
	(2)~synthetic instances]{Results Experiment 2: (1) Random instances ($k=25$, $n=200$, singleton communities), 
	(2)~synthetic instances ($k=25$, $n=500$, communities induced by gender and region), minimum community coverage on the vertical axis, $b$ on the horizontal axis.
 }
	\label{fig: experiment 2 1}
\end{figure}

\paragraph{Experiment 2.}
The goal of the second experiment is to analyze how many links we need to add in order to make the standard greedy algorithm for IM satisfy similar or better fairness guarantees than fairness-tailored algorithms.
To this end, we compare our method \algo{to\_minC\_infl} with the following competitors: \algo{grdy\_im}, the standard greedy algorithm for IM (we use the implementation of Tang, Xiao, and Shi~\cite{TangXS14}) serves mainly as a baseline; \algo{grdy\_maxmin}, the greedy algorithm that iteratively selects $k$ seed nodes to maximize the minimum community coverage; \algo{myopic}, a simple heuristic proposed by Fish et al.~\cite{Fish19} that iteratively chooses the node with minimum probability of being reached as seed; \algo{mult\_weight}, the multiplicative weights routine for the set-based problem of Becker et al.~\cite{BeckerDGG22}; \algo{moso}, a multi-objective submodular optimization approach proposed by Tsang et al.~\cite{TsangWRTZ19} (we choose gurobi as solver~\cite{gurobi}). 

We note that the algorithms \algo{mult\_weight} and \algo{moso} are designed to compute distributions over seed sets and nodes, respectively, and thus they can be used to obtain both ex-ante and ex-post fairness guarantees. We defer the reader to the work of Becker et al.~\cite{BeckerDGG22} for details regarding probabilistic seeding and ex-ante guarantees. Hence, for these two algorithms we include both there ex-post and ex-ante values in our evaluations. It is worth pointing out that is much easier (especially in settings with many communities) to achieve good values ex-ante rather than ex-post.

We show the results for the random and synthetic instances in Figure~\ref{fig: experiment 2 1}.
Already for small values of $b$, i.e., after adding just a few edges, our algorithm surpass all ex-post fairness values of the competitors. Even better and maybe surprisingly, our algorithm also achieves ex-post values higher than the ex-ante values of \algo{mult\_weight} and \algo{moso}.
We exclude the algorithms \algo{grdy\_maxmin} and \algo{moso} from experiments with the real world instance as they perform the worst in terms of running time.
We turn to the real world instances, see Figure~\ref{fig: experiment 2 2}, on which we evaluate our algorithm for three fixed values of $b=10,20,50$. We observe that by adding only 10 edges, the fairness values obtained by our algorithm dominate over the ex-post fairness values achieved by the competitors. 
We also observe that after adding only 50 edges, the fairness values of our method are larger than (or comparable to) the ex-ante fairness values achieved by \algo{mult\_weight}, on all instances.

\begin{figure}[ht]
	\centering
	\includegraphics[width=0.95\linewidth]{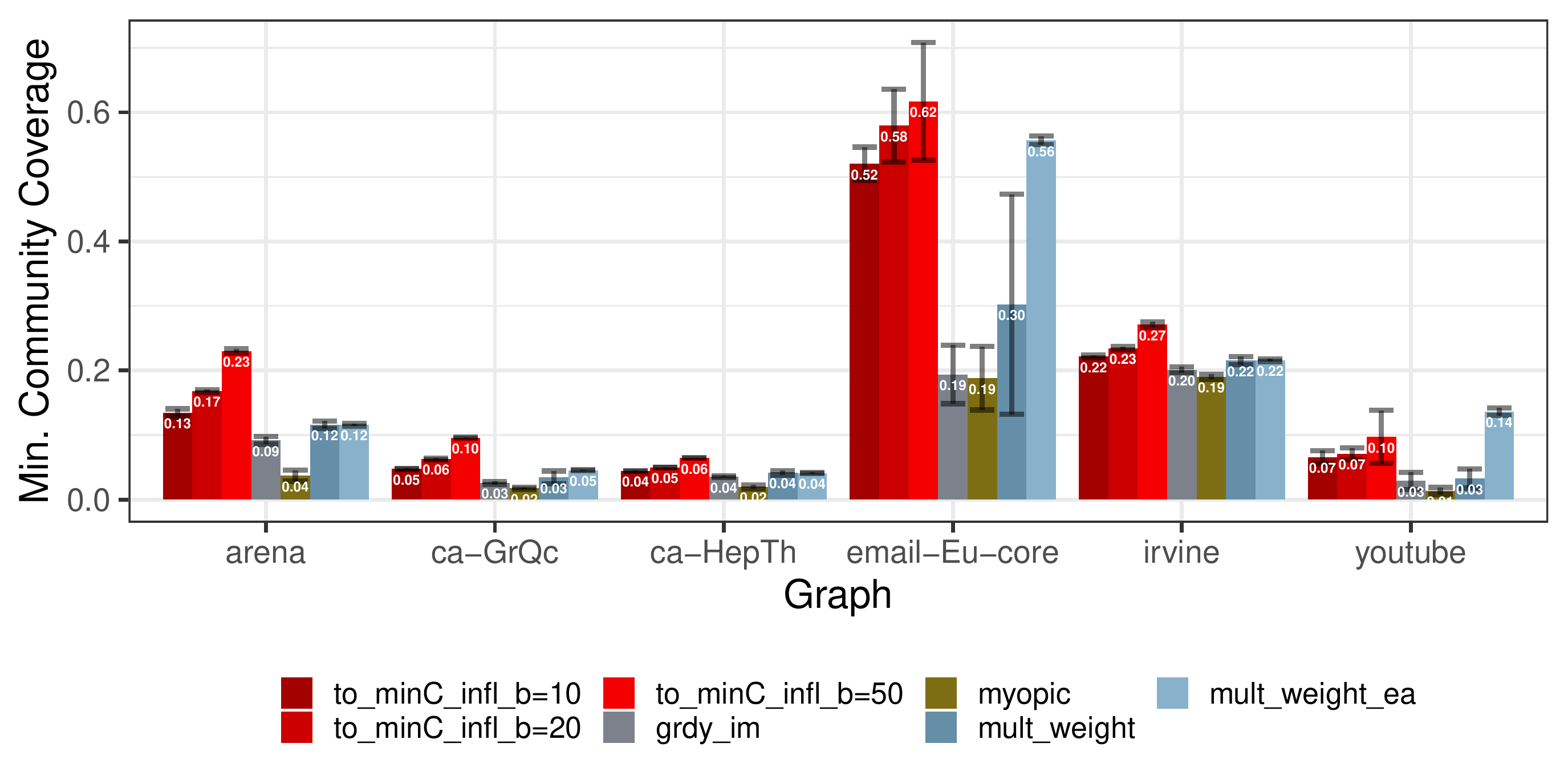} 
	\caption{Results Experiment 2: Real world graphs with BFS communities ($m=10$) for \texttt{arena}, \texttt{ca-GrQc}, \texttt{ca-HepTh}, \texttt{irvine} and real communities for \texttt{email-Eu-core} and \texttt{youtube}, $k=20$,  minimum community coverage on the vertical axis, different instances on the horizontal.}
	\label{fig: experiment 2 2}
\end{figure}

\section{Conclusion}
We studied two optimization problems with the goal of adding links to a social network such as to make purely efficiency-oriented information spreading automatically fair. In the first problem \FIMAL, our goal is to add at most $b$ non-edges $F$ to the graph such that the minimum community coverage $\sigma_C(S, F)$ is maximized w.r.t.\ maximizing sets $S$ of size at most $k$ to spread information.
We showed several hardness and hardness of approximation results for \FIMAL. Maybe most importantly, the decision version of \FIMAL is $\Sigma_2^p$-hard even in the deterministic case and remains \NP-hard even if $b=1$ or $k=1$ (in the latter case even to approximate within any factor). We thus proposed to study a second optimization problem \FIMALG that entails to add at most $b$ non-edges $F$ to the graph such that the minimum expected community coverage is maximized when information is spread using the greedy algorithm for influence maximization. As we observed, also this problem remains \NP-hard to approximate to within any factor if $k=1$. On the other hand, in contrast to \FIMAL, \FIMALG becomes polynomial time $-\eps$-approximable if $b$ is a constant. We then proposed several heuristics for \FIMALG and evaluated them in an experimental study. Lastly, we conducted an experiment showing that the greedy algorithm for IM achieves similar or even better levels of fairness than fairness-tailored algorithms already after adding a few edges proposed by our algorithm.

\bibliography{references}



\end{document}